\newcommand{\RR}{\mathbb R}
\newcommand{\CC}{\mathbb C}
\newcommand{\ZZ}{\mathbb Z}
\newcommand{\dd}{{\rm d}}
\DeclareMathOperator*{\slim}{s-lim}
\newcommand{\Span}[1]{\operatorname{Span} \{#1 \}}
\numberwithin{equation}{section}
\newtheorem{Theorem}{Theorem}[section]
 { \theoremstyle{definition}
\newtheorem{Definition}[Theorem]{Definition} }
\begin{document}


\newcommand{\arXivNumber}{1612.05622}

\renewcommand{\PaperNumber}{039}

\FirstPageHeading

\ShortArticleName{Factor Ordering and Path Integral Measure for Quantum Gravity in (1+1) Dimensions}

\ArticleName{Factor Ordering and Path Integral Measure\\ for Quantum Gravity in (1+1) Dimensions}

\Author{John HAGA and Rachel Lash MAITRA}
\AuthorNameForHeading{J.~Haga and R.L.~Maitra}
\Address{Department of Applied Mathematics, Wentworth Institute of Technology,\\ 550 Huntington Ave., Boston MA 02115, USA}
\Email{\href{mailto:hagaj@wit.edu}{hagaj@wit.edu}, \href{mailto:maitrar@wit.edu}{maitrar@wit.edu}}

\ArticleDates{Received December 19, 2016, in f\/inal form June 01, 2017; Published online June 07, 2017}

\Abstract{We develop a mathematically rigorous path integral representation of the time evolution operator for a model of (1+1) quantum gravity that incorporates factor ordering ambiguity. In obtaining a suitable integral kernel for the time-evolution operator, one requires that the corresponding Hamiltonian is self-adjoint; this issue is subtle for a particular category of factor orderings. We identify and parametrize a complete set of self-adjoint extensions and provide a canonical description of these extensions in terms of boundary conditions. Moreover, we use Trotter-type product formulae to construct path-integral representations of time evolution.}

\Keywords{factor ordering in quantum gravity; path integrals in quantum gravity; singularity avoidance in quantum gravity; quantization on a half-line}

\Classification{81V17; 81S40; 83C80}

\section{Introduction}
In general relativity, the conf\/iguration space for the gravitational f\/ield presents two formidable challenges to quantization: f\/irst, it is curved, and second, it has a boundary. In terms of Arnowitt--Deser--Misner (ADM) metric variables, the curvature of the conf\/iguration space of 3-metrics $h_{ab}$ on a spacelike slice is evident in the DeWitt supermetric
\begin{gather}
G_{abcd} = \frac{1}{2 \sqrt{h}} ( h_{ac} h_{bd} + h_{ad} h_{bc} - h_{ab} h_{cd}),
\label{supermetric}
\end{gather}
which appears in the kinetic term of the Hamiltonian constraint
\begin{gather}\label{HamConstraint}
\mathcal{H} = G_{abcd} \pi^{ab} \pi^{cd} - \sqrt{h} \big({}^{(3)}R - 2 \Lambda \big)
\end{gather}
generating time-reparametrization invariance. Here $\pi^{ab} = \sqrt{h} ( h^{ab} K - K^{ab})$ is the momentum canonically conjugate to $h_{ab}$, def\/ined in terms of the extrinsic curvature $K_{ab}$ of the spacelike slice. The potential term of the Hamiltonian constraint~\eqref{HamConstraint} involves the scalar curvature ${}^{(3)}R$ of $h_{ab}$ and a cosmological constant $\Lambda$. The dif\/f\/iculty for quantization engendered by the DeWitt supermetric lies in the ambiguity as to how noncommuting operators~$\hat{\pi}^{ab}$ and~$\hat{h}_{ab}$ are to be ordered in the canonically quantized Hamiltonian constraint, i.e., the Wheeler--DeWitt equation $\hat{\mathcal{H}} \Psi = 0$.

The second problem for quantization, that of the boundary in conf\/iguration space, results from the requirement that spatial 3-metrics be positive def\/inite. In general, conf\/iguration spaces with boundary require the imposition of boundary conditions on quantum states in order for the Hamiltonian operator to be self-adjoint. In the case of quantum gravity, such boundary conditions take on added interest because of their signif\/icance to the quantum behavior of the universe close to the Big Bang/Big Crunch singularity.

The ambiguities of operator ordering and boundary condition are not special to canonical quantization, but manifest themselves in a path integral approach through the choice in def\/inition of a path integral measure and contour of paths to integrate over. The link is most readily seen by considering a (formal) path integral expression for a wave functional
\begin{gather}\label{WavefunctionUniverse}
\Psi [ h_{ab} ] = \mathcal{N} \int_{\mathcal{C}} e^{-I_E [g_{\alpha \beta}]} \mathcal{D} [ g_{\alpha \beta}]
\end{gather}
as in the Hartle--Hawking wave function of the universe \cite{HartleHawking1983}, where $\mathcal{N}$ represents a normalization constant, $I_E[g]$ is the Euclidean signature action evaluated on a trajectory through the superspace of metric conf\/igurations, and $\mathcal{C}$ represents a suitable contour of 4-metrics $g_{\alpha \beta}$ bounded by the 3-metric $h_{ab}$ at time $t=0$. Since the wave functional \eqref{WavefunctionUniverse} must satisfy a Wheeler--DeWitt equation for given factor ordering together with a boundary condition on superspace, and since its only variability is in the def\/inition of the path integral measure $\mathcal{D} [ g_{\alpha \beta} ]$ and contour~$\mathcal{C}$, these choices must be made jointly.

Various proposals exist to motivate the choice of particular orderings or measures on physical grounds, primarily supported by the rationale that classically reparametrization-invariant theories should retain coordinate invariance upon quantization~\cite{Anderson2009,DeWitt1957, Halliwell1988, Moss1988}. Reasoning along these lines has led to the arguments in favor of a Laplace--Beltrami ordering in which the kinetic term of the Hamiltonian is promoted to a covariant Laplace--Beltrami operator on conf\/iguration space. More generally, a multiple of the Ricci curvature of conf\/iguration space may be added without breaking reparametrization invariance; such an ordering is known as a conformal ordering.

An important means of illuminating the question of factor ordering is to investigate the explicit dependence of the path integral measure on the choice of ordering. The subtlety of this correspondence stems in part from the fact that rigorous def\/inition of a path integral measure for a quantum f\/ield theory is intrinsically dif\/f\/icult, since one integrates over paths through a~space of f\/ield conf\/igurations. In addition, two complications arise in general relativity: (1)~the DeWitt supermetric~\eqref{supermetric} has signature $(-+++++)$ (so that one mode always has an opposite sign in the kinetic term), and (2)~the conf\/iguration space of 3-metrics~$g_{ab}$ is restricted by the requirement of positive def\/initeness.

In this paper, we examine the correspondence between factor orderings and path integral measures in the model problem of (1+1) gravity. As shown by Nakayama in~\cite{Nakayama1993}, a~reduction from f\/ield theory to a~mechanical system is possible through gauge f\/ixing, with no imposition of symmetries. This model facilitates a direct consideration of the boundary problem in conf\/iguration space, absent the indef\/inite signature in the kinetic term.

We introduce a 1-parameter family of factor orderings and demonstrate that this is equivalent to the addition of a singular quantum potential in the Hamiltonian. The mathematically rigorous construction of the Feynman path integral involving a singular potential was formulated by Nelson in \cite{Nelson}. Orderings which yield a potential well or suf\/f\/iciently shallow barrier necessitate a boundary condition on quantum states to determine a time evolution operator; a suf\/f\/iciently steep positive quantum potential obviates the need for any boundary condition on wave functions (resulting in a unique time evolution operator).

\looseness=1 Orderings yielding a positive barrier, in concert with a Dirichlet boundary condition on quantum states, permit application of Trotter-type product formulae to express the time evolution operator as a path integral. In the critical case between a potential well and a barrier, path integral representations of time evolution under each possible boundary condition may be constructed for those orderings which yield no added quantum potential. In particular, the Laplace--Beltrami ordering (which here coincides with the conformal ordering) belongs to this class. For orderings yielding a well near the singularity, the product formulae under consideration do not apply, and the question of path integral representation remains for future investigation.

We state some supporting mathematical results in Section \ref{mathback}. In Section~\ref{action} a brief exposition of the model for (1+1) gravity is presented. We canonically quantize the model and introduce factor orderings in Section~\ref{schroquant}. Section~\ref{states} contains a rigorous examination of the domains of quantum states corresponding to each factor ordering in the 1-parameter family, and we derive path-integral representations of time evolution in Section~\ref{FK}.

\section{Mathematical background}\label{mathback}
In this section we state some fundamental results used in subsequent sections. We take $\mathscr H$ to be a separable Hilbert space with inner product $\langle \cdot,\cdot\rangle$. For detailed development of the functional calculus for self-adjoint operators on $\mathscr H$, we direct the reader to \cite{reedsimon1, reedsimon2}; however we state some important def\/initions and theorems for completeness.

\begin{Theorem}[von Neumann]\label{EssentialCriterion}
Let $A$ be a symmetric operator on a Hilbert space $\mathscr H$. Then the following are equivalent:
\begin{enumerate}\itemsep=0pt
\item[$(a)$] $A$ is essentially self-adjoint.
\item[$(b)$] $\operatorname{Ker}(A^*\pm i)=\{0\}$.
\item[$(c)$] $\operatorname{Ran}(A\pm i)$ is dense in $\mathscr H$.
\end{enumerate}
\end{Theorem}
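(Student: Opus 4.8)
The plan is to establish the equivalences by running through the cycle $(a)\Rightarrow(b)\Leftrightarrow(c)\Rightarrow(a)$. The single computational ingredient needed throughout is the identity $\|(A\pm i)\psi\|^2 = \|A\psi\|^2 + \|\psi\|^2$ for $\psi\in D(A)$, which holds because symmetry of $A$ makes the cross terms $\pm i\langle A\psi,\psi\rangle \mp i\langle\psi,A\psi\rangle$ cancel; the same identity holds verbatim for the closure $\overline{A}$ on $D(\overline{A})$. From it I would extract two consequences to be used repeatedly: first, $A\pm i$ and $\overline{A}\pm i$ are injective with bounded inverse on their ranges; and second, because $\overline{A}$ is closed, $\operatorname{Ran}(\overline{A}\pm i)$ is a closed subspace of $\mathscr H$.

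For $(b)\Leftrightarrow(c)$ I would invoke the general identity $\operatorname{Ran}(T)^\perp = \operatorname{Ker}(T^*)$, valid for any densely defined $T$, together with $(A\pm i)^* = A^*\mp i$. This yields $\operatorname{Ran}(A\pm i)^\perp = \operatorname{Ker}(A^*\mp i)$, so density of $\operatorname{Ran}(A\pm i)$ in $\mathscr H$ is exactly the vanishing of $\operatorname{Ker}(A^*\mp i)$; taking both signs gives the equivalence. For $(a)\Rightarrow(b)$: if $A$ is essentially self-adjoint then $\overline{A} = A^*$ is self-adjoint, and the norm identity applied to $\overline{A}$ shows that $(\overline{A}\pm i)\psi = 0$ forces $\psi = 0$; since $\operatorname{Ker}(A^*\pm i) = \operatorname{Ker}(\overline{A}\pm i)$, statement $(b)$ follows.

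For $(c)\Rightarrow(a)$, the substantive step, suppose $\operatorname{Ran}(A\pm i)$ are both dense. Since each is contained in the closed subspace $\operatorname{Ran}(\overline{A}\pm i)$, the latter must equal $\mathscr H$; that is, $\overline{A}\pm i$ is onto. As $\overline{A}\subseteq A^*$ always holds for symmetric $A$, it remains to prove $D(A^*)\subseteq D(\overline{A})$. Given $\phi\in D(A^*)$, I would use surjectivity of $\overline{A}+i$ to choose $\eta\in D(\overline{A})$ with $(\overline{A}+i)\eta = (A^*+i)\phi$; since $\overline{A}\subseteq A^*$ this gives $(A^*+i)(\phi-\eta) = 0$, so $\phi-\eta\in\operatorname{Ker}(A^*+i) = \operatorname{Ran}(A-i)^\perp = \{0\}$ by hypothesis, whence $\phi = \eta\in D(\overline{A})$ and $A^* = \overline{A}$ is self-adjoint. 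The main obstacle is precisely this last implication: one must first upgrade \emph{density} of $\operatorname{Ran}(A\pm i)$ to genuine \emph{surjectivity} of $\overline{A}\pm i$ before the equation $(\overline{A}+i)\eta = (A^*+i)\phi$ can be solved, and the bookkeeping has to keep careful track of the three operators $A$, $\overline{A}$, $A^*$ and of the fact that passing to the closure leaves the adjoint unchanged.
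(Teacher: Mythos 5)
Your proof is correct: the norm identity $\|(A\pm i)\psi\|^2=\|A\psi\|^2+\|\psi\|^2$, the closedness of $\operatorname{Ran}(\overline{A}\pm i)$, the duality $\operatorname{Ran}(A\pm i)^\perp=\operatorname{Ker}(A^*\mp i)$, and the surjectivity argument showing $D(A^*)\subseteq D(\overline{A})$ together give a complete and standard proof of the cycle $(a)\Rightarrow(b)\Leftrightarrow(c)\Rightarrow(a)$. The paper states this theorem without proof, citing Reed--Simon, and your argument is essentially the one found there, so there is nothing to correct.
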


Every symmetric operator $A$ has a closed extension, as $A^*$ is always closed for densely def\/ined operators. However there may be multiple closed extensions of $A$ as we note below.

\begin{Definition}
Suppose that $A$ is a symmetric operator. Let
\begin{gather*}
\mathscr K_\pm = \operatorname{Ker} (i\mp A^*)= \operatorname{Ran}(\pm i+A)^\perp.
\end{gather*}
$\mathscr K_+$ and $\mathscr K_-$ are the \emph{deficiency subspaces} of $A$, and the pair of numbers $n_+$ and $n_-$ given by $n_+(A)=\dim(\mathscr K_-)$ and $n_-(A)=\dim(\mathscr K_+)$ are the \emph{deficiency indices} of~$A$.
\end{Definition}

As famously proven by von Neumann, the self-adjoint extensions of a closed symmetric ope\-ra\-tor bijectively correspond to partial isometries between the def\/iciency subspaces. In particular, considering operators of the form $\hat H=-\hbar^2\frac{\dd^2}{\dd x^2}+V(x)$, we make use of a theorem due to H.~Weyl.
\begin{Definition}
The potential $V(x)$ is in the \emph{limit circle case} at inf\/inity (resp.\ at zero) if for some $\lambda$, all solutions of \begin{gather*}-\varphi''(x)+V(x)\varphi(x)=\lambda \varphi(x)\end{gather*} are square integrable at inf\/inity (resp.\ at zero). If $V(x)$ is not in the limit circle case at inf\/inity (resp.\ at zero), it is said to be in the \emph{limit point case}.
\end{Definition}
\begin{Theorem}[Weyl's limit point-limit circle criterion] \label{weylcriterion} Let $V(x)$ be a continuous real-valued function on $(0,\infty)$. Then $\hat H=-\hbar^2 \frac{\dd^2}{\dd x^2}+V(x)$ is essentially self-adjoint on $C_0^\infty (0,\infty)$ if and only if $V(x)$ is in the limit point case at both zero and infinity.
\end{Theorem}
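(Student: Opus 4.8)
The plan is to invoke von Neumann's criterion (Theorem~\ref{EssentialCriterion}): essential self-adjointness of $\hat H$ on $C_0^\infty(0,\infty)$ is equivalent to $\operatorname{Ker}(\hat H^*\pm i)=\{0\}$. Because $V$ is continuous on $(0,\infty)$, any distributional solution of $-\hbar^2\psi''+V\psi\in L^2(0,\infty)$ automatically has $\psi,\psi'$ locally absolutely continuous (in fact $\psi\in C^2$ away from the endpoints), and an integration-by-parts computation identifies $\hat H^*$ with the maximal operator $\psi\mapsto -\hbar^2\psi''+V\psi$ on the domain of all such $\psi$. Hence $\dim\operatorname{Ker}(\hat H^*\mp i)$ equals the number of linearly independent solutions of the ODE $-\hbar^2\psi''+V\psi=\pm i\psi$ lying in $L^2(0,\infty)$; since complex conjugation is a conjugation commuting with $\hat H$ (here $V$ is real), this number is the same for $+i$ and $-i$, and I call it $n$.

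The core ingredient is Weyl's alternative, which I would establish at each endpoint separately. Fixing $\lambda$ with $\operatorname{Im}\lambda\neq 0$, the classical solutions of $-\hbar^2\psi''+V\psi=\lambda\psi$ near a given endpoint form a two-dimensional space, and the nested-circle construction---equivalently, examining the solutions singled out by a one-parameter family of self-adjoint boundary conditions at an interior point $c$ and letting $c$ approach the endpoint---shows that the subspace of those solutions that are $L^2$ near the endpoint has dimension either $2$ (the limit circle case) or exactly $1$ (the limit point case), never $0$. The key additional fact is that this dimension is independent of $\lambda$: if all solutions at some $\lambda_0$ are $L^2$ near the endpoint, one rewrites the equation at another $\lambda$ as $-\hbar^2\psi''+(V-\lambda_0)\psi=(\lambda-\lambda_0)\psi$, solves it by variation of parameters against a fundamental system at $\lambda_0$, and closes the estimate with Gronwall's inequality to conclude that the solutions at $\lambda$ are $L^2$ near the endpoint as well. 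This is what licenses comparing the limit point/limit circle classification of the Definition---stated for an unspecified $\lambda$---with the count $n$ of square-integrable eigensolutions at $\lambda=\pm i$.

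Assembling the equivalence is then a dimension count. A solution of $-\hbar^2\psi''+V\psi=i\psi$ lies in $L^2(0,\infty)$ if and only if it is $L^2$ near $0$ and $L^2$ near $\infty$, so $n$ equals the dimension of the intersection of the ``$L^2$ near $0$'' and ``$L^2$ near $\infty$'' subspaces of the two-dimensional solution space. If either endpoint is in the limit circle case, the corresponding subspace is the whole solution space, so the intersection has dimension at least one, $n\geq 1$, and $\hat H$ fails to be essentially self-adjoint. If both endpoints are in the limit point case, each subspace is exactly one-dimensional, and I claim the intersection is $\{0\}$: for a candidate $\psi\in L^2(0,\infty)$ with $\hat H^*\psi=i\psi$, the Lagrange identity gives $\frac{\dd}{\dd x}\bigl(\hbar^2(\psi'\bar\psi-\psi\bar\psi')\bigr)=-2i|\psi|^2$, so the boundary expression $W(x)=\hbar^2(\psi'(x)\bar\psi(x)-\psi(x)\bar\psi'(x))$ has finite limits at $0$ and $\infty$ with $W(\infty)-W(0)=-2i\|\psi\|^2$, while the limit point property at each endpoint forces $W(0)=W(\infty)=0$; hence $\|\psi\|=0$, $n=0$, and $\hat H$ is essentially self-adjoint.

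I expect the main obstacle to be the two genuinely analytic ingredients buried in the middle step: the existence half of Weyl's alternative (that for non-real $\lambda$ there is always at least one square-integrable solution near each endpoint, obtained via convergence of the Weyl circles), and the fact that the limit point condition as phrased---non-square-integrability of a single eigensolution---is equivalent to the degeneracy of the endpoint boundary form $W$ on the entire maximal domain, which is precisely what is used above to kill $W(0)$ and $W(\infty)$. The $\lambda$-independence via Gronwall and the final dimension bookkeeping are, by comparison, routine.
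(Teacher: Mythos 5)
The paper does not prove this theorem at all: it is quoted as a classical result of Weyl (with the reader directed to Reed--Simon), so there is no in-paper argument to compare against. Your sketch is the standard textbook proof and is sound: reduce essential self-adjointness to the deficiency-index computation via Theorem~\ref{EssentialCriterion}, identify $\operatorname{Ker}(\hat H^*\mp i)$ with the $L^2$ solutions of the eigenvalue ODE (using continuity of $V$ to get the maximal operator and interior regularity), and then do the dimension count at the two endpoints. The bookkeeping is right in both directions: limit circle at one endpoint gives $\dim(S_0\cap S_\infty)\ge 2+1-2=1$ by Weyl's alternative at the other endpoint, and in the doubly limit point case the Lagrange identity plus vanishing of the boundary form kills any joint $L^2$ eigensolution. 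You also correctly flag the two genuinely nontrivial lemmas you are importing rather than proving --- the nested-circles existence statement (at least one $L^2$ solution near each endpoint for nonreal $\lambda$, with the $\lambda$-independence of the limit circle property needed to reconcile the paper's ``for some $\lambda$'' definition with the count at $\lambda=\pm i$), and the fact that the endpoint boundary form degenerates on the whole maximal domain at a limit point endpoint. Those are exactly the two places where a full write-up would need real work; as a proposal, nothing here would fail.
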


We seek to characterize the self-adjoint extensions of the Hamiltonian in terms of explicit boundary conditions. The following result appears in \cite{Gorbachukchuk}:

\begin{Theorem}\label{boundaryspacetheorem}
If $A$ is a symmetric operator on $\mathscr H$ with deficiency indices $(n,n)$, $n\leq \infty$, if there is an $n$-dimensional Hilbert space $\mathscr H_b$ $($the ``boundary space''$)$, and if there exist two linear maps $\Gamma_1,\Gamma_2\colon D(A^*)\to\mathscr H_b$ such that for all $f_1,f_2\in D(A^*)$, \begin{gather*}\langle A^* f_1,f_2 \rangle-\langle f_1,A^* f_2 \rangle=\langle \Gamma_1 f_1,\Gamma_2 f_2\rangle_b - \langle \Gamma_2f_1,\Gamma_1f_2\rangle_b,\end{gather*} and for any two $\Psi_1,\Psi_2\in \mathscr H_b$ there exists $g\in D(A^*)$ satisfying \begin{gather*}\Gamma_ig=\Psi_i, \qquad i=1,2,
\end{gather*} then there is a one-to-one correspondence between unitary maps $U\in \mathscr U(\mathscr H_b)$ and the self-adjoint extensions of $A$, where a map $U$ describes the domain of the corresponding extension $A_U$ as \begin{gather*}D(A_U)=\big\{\psi\in D(A^*)\,\big|\,(U-1_b)\Gamma_1\psi+i(U+1_b)\Gamma_2\psi=0\big\}.\end{gather*}
\end{Theorem}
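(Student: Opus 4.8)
I would prove this by the standard boundary‑triplet argument, organized so as to let the unitary $U$ on $\mathscr H_{b}$ appear directly. Since a symmetric operator and its closure share the same adjoint and the same self‑adjoint extensions, assume $A$ closed. Any self‑adjoint extension $\tilde A\supseteq A$ satisfies $\tilde A=\tilde A^{*}\subseteq A^{*}$, so it is $A^{*}$ acting on some domain $D$ with $D(A)\subseteq D\subseteq D(A^{*})$; the task is to classify such domains. The first step is to show $\operatorname{Ker}\Gamma_{1}\cap\operatorname{Ker}\Gamma_{2}=D(A)$: if $f\in D(A)$ then the left-hand side of the Green identity vanishes for every $f_{2}\in D(A^{*})$, so $\langle\Gamma_{1}f,\Gamma_{2}f_{2}\rangle_{b}=\langle\Gamma_{2}f,\Gamma_{1}f_{2}\rangle_{b}$ for all such $f_{2}$, and joint surjectivity of $(\Gamma_{1},\Gamma_{2})$ forces $\Gamma_{1}f=\Gamma_{2}f=0$; conversely $\Gamma_{1}f=\Gamma_{2}f=0$ makes $\langle A^{*}f,g\rangle=\langle f,A^{*}g\rangle$ for all $g\in D(A^{*})$, whence $f\in D(A^{**})=D(A)$. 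Thus $(\Gamma_{1},\Gamma_{2})$ induces a linear isomorphism of $D(A^{*})/D(A)$ onto $\mathscr H_{b}\oplus\mathscr H_{b}$.

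Next I would pass to Cayley‑type coordinates on the boundary space: for $\psi\in D(A^{*})$ set $\xi(\psi)=\Gamma_{1}\psi+i\Gamma_{2}\psi$ and $\eta(\psi)=\Gamma_{1}\psi-i\Gamma_{2}\psi$. Composing the isomorphism above with the invertible substitution $(a,b)\mapsto(a+ib,\,a-ib)$ shows $\psi\mapsto(\xi(\psi),\eta(\psi))$ is again surjective onto $\mathscr H_{b}\oplus\mathscr H_{b}$ with kernel $D(A)$, and a short computation recasts the Green identity as
\begin{gather*}
\langle A^{*}f_{1},f_{2}\rangle-\langle f_{1},A^{*}f_{2}\rangle=\tfrac{i}{2}\big(\langle\xi(f_{1}),\xi(f_{2})\rangle_{b}-\langle\eta(f_{1}),\eta(f_{2})\rangle_{b}\big)
\end{gather*}
(up to a convention-dependent constant). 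Hence, for an extension $\tilde A$ given by $A^{*}$ on a domain $D\supseteq D(A)$, self-adjointness of $\tilde A$ is equivalent to the equality
\begin{gather*}
\big\{g\in D(A^{*}):\langle\xi(\psi),\xi(g)\rangle_{b}=\langle\eta(\psi),\eta(g)\rangle_{b}\ \ \forall\,\psi\in D\big\}=D ,
\end{gather*}
whose `$\supseteq$' direction is symmetry of $\tilde A$ (equivalently $\|\xi(\psi)\|_{b}=\|\eta(\psi)\|_{b}$ for all $\psi\in D$, by polarization) and whose `$\subseteq$' direction is the maximality of $D$; the left-hand set is $D(\tilde A^{*})$ rewritten via the Green identity.

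Now I would set up the bijection. Given a unitary $U$ on $\mathscr H_{b}$, let $D_{U}=\{\psi\in D(A^{*}):\eta(\psi)=U\xi(\psi)\}$; rearranging $\Gamma_{1}\psi-i\Gamma_{2}\psi=U(\Gamma_{1}\psi+i\Gamma_{2}\psi)$ gives exactly $(U-1_{b})\Gamma_{1}\psi+i(U+1_{b})\Gamma_{2}\psi=0$. Then $D(A)\subseteq D_{U}$ (there $\xi=\eta=0$); $A^{*}|_{D_{U}}$ is symmetric since $\|\eta(\psi)\|_{b}=\|U\xi(\psi)\|_{b}=\|\xi(\psi)\|_{b}$; and, since surjectivity of $(\xi,\eta)$ lets one prescribe $\xi(\psi)$ freely on $D_{U}$ so that $\xi(D_{U})=\mathscr H_{b}$, the displayed membership condition collapses for $g$ to $\langle\xi(\psi),\xi(g)-U^{*}\eta(g)\rangle_{b}=0$ for all $\psi\in D_{U}$, i.e.\ $\xi(g)=U^{*}\eta(g)$, equivalently $\eta(g)=U\xi(g)$, i.e.\ $g\in D_{U}$; so $A^{*}|_{D_{U}}$ is self-adjoint. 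Conversely, let $\tilde A=A^{*}|_{D}$ be self-adjoint. Then $\|\xi(\psi)\|_{b}=\|\eta(\psi)\|_{b}$ on $D$ gives $\operatorname{Ker}\xi\cap D=\operatorname{Ker}\eta\cap D$, so $\xi(\psi)\mapsto\eta(\psi)$ is a well-defined linear isometry $U_{0}\colon\xi(D)\to\eta(D)$. Maximality upgrades $U_{0}$ to a unitary on $\mathscr H_{b}$: if $0\neq\phi\perp\xi(D)$, choose (by surjectivity of $(\xi,\eta)$) a $g$ with $\xi(g)=\phi$, $\eta(g)=0$; then $g$ satisfies the membership condition for $D$, so $g\in D$ and $\phi\in\xi(D)$, forcing $\phi=0$ — a contradiction; hence $\xi(D)=\mathscr H_{b}$, and symmetrically $\eta(D)=\mathscr H_{b}$. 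Thus $U_{0}$ is unitary, $D=D_{U_{0}}$, and $U_{0}$ is the only unitary yielding $D$ since it is pinned down on all of $\xi(D)=\mathscr H_{b}$; the same remark shows distinct unitaries give distinct domains, completing the one-to-one correspondence. (One could instead transport von Neumann's parametrization by unitaries $\mathscr K_{+}\to\mathscr K_{-}$ along $\Gamma$, but the route above exhibits $U$ on $\mathscr H_{b}$ directly.)

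The delicate step — and where I expect the genuine work to lie — is promoting $U_{0}$ from an isometry on $\xi(D)$ to a unitary on all of $\mathscr H_{b}$, for which the ``$\phi\perp\xi(D)$'' argument needs $\xi(D)$ to be \emph{closed}. For finite deficiency indices this is immediate: $\Gamma(D)\subseteq\mathscr H_{b}\oplus\mathscr H_{b}$ has dimension at most $2n$. For $n=\infty$ one first shows $\Gamma_{1},\Gamma_{2}$ are bounded from $D(A^{*})$ in its graph norm into $\mathscr H_{b}$ by a closed-graph argument (if $\psi_{k}\to\psi$ in graph norm and $\Gamma\psi_{k}\to(a,b)$, passing to the limit in the Green identity tested against arbitrary $f\in D(A^{*})$ and using joint surjectivity of $(\Gamma_{1},\Gamma_{2})$ forces $(a,b)=\Gamma\psi$), and then invokes the bounded inverse theorem to conclude $(\Gamma_{1},\Gamma_{2})$ induces a \emph{topological} isomorphism $D(A^{*})/D(A)\cong\mathscr H_{b}\oplus\mathscr H_{b}$; this carries the graph-closed subspace $D$ (closed since $\tilde A$, being self-adjoint, is closed) onto a closed subspace, so $\xi(D)$ is closed. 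These operator-theoretic details are exactly what \cite{Gorbachukchuk} supplies; the algebraic skeleton above is what makes the parametrizing unitary $U$ appear.
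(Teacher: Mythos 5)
The paper does not prove this theorem; it is quoted verbatim from the cited reference \cite{Gorbachukchuk}, so there is no in-paper argument to compare against. Your proof is correct and is essentially the standard boundary-triplet argument that the reference itself uses (passing to the Cayley coordinates $\Gamma_1\pm i\Gamma_2$ on the boundary space and matching graphs of unitaries), and you correctly isolate and resolve the one genuinely delicate point for $n=\infty$, namely the closedness of $\xi(D)$ via the closed-graph/bounded-inverse argument for $(\Gamma_1,\Gamma_2)$ on $D(A^*)$ in the graph norm.
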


The existence of the boundary space for symmetric $A$ is proven in \cite{Gorbachukchuk}. We apply this theory to the Hamiltonians arising from various factor orderings in Section \ref{states}.

\section{Action for (1+1)-dimensional gravity}\label{action}

Gravity in one space and one time dimension has been considered both as a model problem for full (3+1) geometric theories of gravity and for its own importance to membrane theory. Since Einstein gravity in (1+1) has trivial dynamics, one must devise an alternative formulation for geometric gravity.

The model for $(1+1)$ gravity considered here originates in membrane theory (see the seminal paper~\cite{Polyakov1981} by Polyakov), where gravity is introduced through an area-minimizing action which in the conformal gauge $g_{ab} = e^{\phi} \hat{g}_{ab}$ (with respect to the reference metric $\hat{g}_{ab}$) yields the Liouville action
\begin{gather*}
S_{\rm L} [ \phi, \hat{g} ] = \int_D \left[ \frac{1}{4} \hat{g}^{ab} \partial_a \phi \partial_b \phi + \frac{1}{2} \phi R_{\hat{g}} + 4 \Lambda e^{\phi} \right] \dd^2 x.
\end{gather*}
Subsequent work by Verlinde \cite{Verlinde1990} allows this action to be written in an arbitrary gauge.

By working in the (Euclidean signature) proper-time gauge $g_{00} = 1$, $g_{01} = g_{10} = 0$, $g_{11} = \gamma ( x^0, x^1 )$, Nakayama \cite{Nakayama1993} obtains a reduced quantum mechanical form of Verlinde's action for the (1+1) spacetime manifold~$\RR \times S^1$ parametrized by $(x^0,x^1)$, $0 \le x^1 \le \pi$. The Euler--Lagrange equations are solved to yield a factorization of the space-space component of the metric $\gamma ( x^0 , x^1 ) = \gamma_0 ( x^0 ) \gamma_1 ( x^1 )$. Together with the condition
\begin{gather*}
\int_0^{\pi} \sqrt{\gamma_1(x^1)} \dd x^1 = (m+1) \pi, \qquad m=0,1,2,\dots,
\end{gather*}
this implies that the normalized spatial arc length reduces to
\begin{gather*}
l(x^0) = \frac{1}{\pi} \int_0^{\pi} \sqrt{\gamma (x^0,x^1)} \dd x^1 = (m+1) \sqrt{\gamma_0 ( x^0 )}.
\end{gather*}
Consequently, the Hamiltonian becomes
\begin{gather*}
H_{\rm red} \propto \frac{1}{4l(x^0)} \left( \frac{\dd l}{\dd x^0} \right)^2 - \Lambda l
\end{gather*}
(still in the Euclidean signature), meaning that in Lorentzian signature one may adopt the Lagrangian action
\begin{gather*}
S[l] = \int_0^T \left[\frac{1}{4l} \dot l ^2 - \Lambda l\right] \dd x^0
\end{gather*}
with corresponding Hamiltonian
\begin{gather}\label{RedHam}
H = l \Pi_l^2 + \Lambda l,
\end{gather}
where the momentum conjugate to $l$ is given by $\Pi_l = \frac{\dot{l}}{2l}$. We note that since~$l$ is invariant under reparametrizations of the coordinate $x^1$, the reduced action $S$ and the subsequent constructions below are naturally preserved by spatial re-coordinatization.

Note that in \cite{Nakayama1993}, a term proportional to $l^{-1}$ is included in the reduced Hamiltonian to account for Casimir energy. Since factor ordering of the kinetic term of the quantized Hamiltonian will introduce such a term naturally (see Section~\ref{schroquant}), we regard~\eqref{RedHam} as the classical Hamiltonian.

\section{Schr\"{o}dinger quantization}\label{schroquant}

A prototype for sampling the ef\/fects of factor ordering ambiguity has been to consider a family of orderings of the form (e.g., \cite{AmbjornGlaserSatoWatabiki, HartleHawking1983, MaitraConfProc,SteiglHinterleitner})
\begin{gather}\label{ijkHam}
\hat{H}_{(j_1,j_2,j_3)} = -\hbar^2 l^{j_1} \frac{\dd}{\dd l} l^{j_2} \frac{\dd}{\dd l} l^{j_3} + \Lambda l, \qquad {j_1}+{j_2}+{j_3}=1 ,
\end{gather}
where the quantized circumference operator $\hat{l} \colon \Psi \to l \Psi$ and its conjugate momentum operator $\hat{\Pi}_l \colon \Psi \to -i \hbar \frac{d \Psi}{d l}$ are variously composed.

Denoting $J_{\pm}=j_3\pm j_1$, \eqref{ijkHam} is formally self-adjoint with respect to the measure $l^{J_-} \dd l$ on $\RR^+ = ( 0 , \infty )$. Transforming states and observables according to
\begin{gather}
L^2 \big( \RR^+ , \mu(l) \dd l \big) \to L^2 \big( \RR^+ , \nu(l) \dd l \big), \notag \\
\Psi \mapsto \tilde{\Psi} = \left[ \frac{\mu(l)}{\nu(l)} \right]^{1/2} \Psi, \notag \\
\hat{A} \mapsto \tilde{A} = \left[ \frac{\mu(l)}{\nu(l)} \right]^{1/2} A \left[ \frac{\mu(l)}{\nu(l)} \right]^{-1/2} ,\label{transHilbert}
\end{gather}
with $\mu(l) = l^{J_-}$ and $\nu(l) = l^{-1/2}$, \eqref{ijkHam} maps to an operator $\tilde H$ which is formally self-adjoint with respect to a~standard measure $l^{-1/2} \dd l$. Thus we have
\begin{gather*}
\hat{H}_{(j_1,j_2,j_3)}=l^{-m} \tilde H l^{m},
\end{gather*}
where $m=\frac{1}{4} + \frac{J_-}{2}$ and
\begin{gather*}
\tilde H = -\hbar^2 l^{1/2} \frac{\dd}{\dd l} l^{1/2} \frac{\dd}{\dd l} + \frac{\hbar^2}{4} q l^{-1} + \Lambda l, \qquad q = J_+^2 - \frac{1}{4} .
\end{gather*}

Normalization of states and formal self-adjointness of operators is preserved under the transformation~\eqref{transHilbert}. Note that the dif\/ferential operator $ l^{1/2} \frac{\dd}{\dd l} l^{1/2} \frac{\dd}{\dd l}$ is the Laplace--Beltrami ope\-ra\-tor on~$\RR^+$ with respect to the metric $g=(l^{-1})$, and the new ef\/fective potential $\frac{\hbar^2}{4} ql^{-1} + \Lambda l$ dif\/fers from the original potential by the singular quantum potential $\frac{\hbar^2}{4} ql^{-1}$. Thus, the Laplace--Beltrami ordering ($j_1=\frac{1}{2}=j_2$, $j_3=0$) belongs to the class of orderings $|J_+|=\frac{1}{2}$, where the quantum potential vanishes.

In the parameter space $(j_1,j_3)$ of orderings, the direction $J_+$ is physically signif\/icant in dictating the coef\/f\/icient of the quantum potential. In contrast, $J_-$ is insignif\/icant, because if $J_+=J_+'$ for two orderings $(j_1,j_2,j_3)$ and $(j_1',j_2',j_3')$, we have
\begin{gather*}
\hat{H}_{(j_1',j_2',j_3')} = l^{-\alpha} \hat{H}_{(j_1,j_2,j_3)} l^{\alpha}, \qquad \alpha = \frac{1}{2} ( J_- - J_-').
\end{gather*}

A further substitution $y = 2 l^{1/2}$, with $\dd y = l^{-1/2} \dd l$ and $\frac{\dd^2}{\dd y^2}=l^{1/2} \frac{\dd}{\dd l} l^{1/2} \frac{\dd}{\dd l}$, transforms the curved coordinate $l$ to a f\/lat one, yielding
\begin{gather}
\tilde{H} = - \hbar^2 \frac{\dd^2}{\dd y^2} + \hbar^2 q y^{-2} + \frac{\Lambda}{4} y^2 .\label{flat_ham}
\end{gather}
In what follows, we denote the kinetic and potential terms of $\tilde{H}$ as
\begin{gather*}
\tilde T = - \hbar^2 \Delta = - \hbar^2 \frac{\dd^2}{\dd y^2}, \qquad \tilde{V} = \hbar^2 q y^{-2} + \frac{\Lambda}{4} y^2 .
\end{gather*}
The Hamiltonian $\tilde H$, including the singular term $\hbar q y^{-2}$, resembles radial Hamiltonians arising in nuclear physics; see, e.g.,~\cite{Kleinert}.

\section[Self-adjoint extensions of $\tilde{H}$]{Self-adjoint extensions of $\boldsymbol{\tilde{H}}$}\label{states}
In def\/ining a time evolution operator, the Hamiltonian (as the inf\/initesimal generator) must be self-adjoint (cf.\ Stone's theorem in~\cite{lax}). Indeed, if one chooses $D(\tilde H)=C_0^\infty(\RR^+)$, standard analysis (see~\cite[Chapter 4]{gitmantyutin}) yields that
\begin{gather}
D(\tilde H^*) = \big\{ \psi \, | \, \psi , \psi' \; \mbox{a.c. in}\; (0,\infty); \; \psi , \tilde{H} \psi \in L^2(\RR^+) \big\},\label{naturaldomain}
\end{gather}
the so-called \emph{natural domain} of the dif\/ferential operation $\check H=- \hbar^2 \frac{\dd^2}{\dd y^2} + \hbar^2 q y^{-2} + \frac{\Lambda}{4} y^2$. In Section~\ref{deficiencysubspaces} we show that $\tilde H$ has a unique self-adjoint extension for a large class of orderings; in Sections~\ref{Boundaryspaces} and~\ref{Asymptotics} we parametrize and characterize the self-adjoint extensions corresponding to those orderings admiting multiple extensions.

\subsection{Def\/iciency subspaces}\label{deficiencysubspaces}

We proceed by determining def\/iciency subspaces for $\tilde H$. To compute the def\/iciency indices of~$\tilde H$ we search for solutions to
\begin{gather}
(\tilde H^*-(\pm i)I)\psi(y)=0 ;\label{defindeq}
\end{gather}
the solution space will be denoted by $\mathscr K^{\pm}$ (respectively). To transform \eqref{defindeq}, we def\/ine the variables $z$ and $\varphi$ as follows
\begin{gather*}z=\frac{\sqrt \Lambda }{2\hbar}y^2\qquad \textrm{and} \qquad \psi(y)=z^\beta e^{-z/2}\varphi(z),
\end{gather*}
where $\beta=\frac{1}{4}+\frac{|J_+|}{2}$. From this~\eqref{defindeq} becomes
\begin{gather}\label{subdefindeq}
z\frac{\dd^2\varphi}{\dd z^2}+(1+ |J_+|-z)\frac{\dd \varphi}{\dd z}-\left(\frac{1+|J_+|\pm i/(\hbar\sqrt \Lambda)}{2}\right)\varphi(z)=0.
\end{gather}
Def\/ining $\alpha =(1+|J_+|\pm i/(\hbar\sqrt \Lambda))/2$ and $\gamma = 1 + |J_+|$, \eqref{subdefindeq} is recognizable as the conf\/luent hypergeometric equation
\begin{gather}\label{chgeqn1}
z \frac{\dd^2\varphi}{\dd z^2}+(\gamma - z)\frac{\dd \varphi}{\dd z}-\alpha \varphi=0 .
\end{gather} Solutions to \eqref{chgeqn1} are conf\/luent hypergeometric functions of the f\/irst and second kind (respectively)~\cite{lebedev}:
\begin{gather*}
\varphi_{1}(z)=\Phi(\alpha,\gamma;z) :=\sum_{k=0}^\infty \frac{(\alpha)_k}{(\gamma)_k}\frac{z^k}{k!},\\
\varphi_{2}(z)=\Psi(\alpha,\gamma;z) :=\frac{\Gamma(1-\gamma)}{\Gamma(1+\alpha-\gamma)}\Phi(\alpha,\gamma;z)+\frac{\Gamma(\gamma-1)}
{\Gamma(\alpha)}z^{1-\gamma}\Phi(1+\alpha-\gamma,2-\gamma;z),
\end{gather*}
where $|z|<\infty$, $|\arg(z)|<\pi$, $\gamma\not=0, -1, -2, \dots$, and the Pochhammer symbol $(a)_k = a(a+1)\cdots (a+k-1)$ for $a\in \CC$ and integer $k\geq 0$. We obtain the following solutions to~\eqref{defindeq}:
\begin{gather*}
\psi_1^\pm(y) =z^\beta e^{-z/2}\sum_{k=0}^\infty \left(\frac{1+|J_+|}{2}\pm\frac{i}{2\hbar \sqrt \Lambda}\right)_k \frac{z^k}{k!(1+|J_+|)_k},\\ 
\psi_2^\pm(y) =\frac{\Gamma(-|J_+|)}{\Gamma\left(\frac{1-|J_+|}{2}\pm \frac{i}{2\hbar \sqrt \Lambda} \right)}z^\beta e^{-z/2}\sum_{k=0}^\infty \left(\frac{1+|J_+|}{2}\pm\frac{i}{2\hbar \sqrt \Lambda}\right)_k \frac{z^k}{k!(1+|J_+|)_k}\nonumber\\
\hphantom{\psi_2^\pm(y)=}{}
+\frac{\Gamma(|J_+|)}{\Gamma\left(\frac{1+|J_+|}{2}\pm \frac{i}{2\hbar \sqrt \Lambda} \right)}z^{\beta-|J_+|} e^{-z/2}\sum_{k=0}^\infty \left(\frac{1-|J_+|}{2}\pm\frac{i}{2\hbar \sqrt \Lambda}\right)_k \frac{z^k}{k!(1-|J_+|)_k} .
\end{gather*}

With the intent of applying Theorem \ref{weylcriterion} we investigate asymptotic behavior of these solutions for small and large~$y$. As $y\to 0^+$,
\begin{gather*}
\psi_1^\pm(y) =O\big( y^{2\beta}\big)=O\big(y^{\frac{1}{2}+|J_+|}\big),\\
\psi_2^\pm(y) =O\big(y^{2\beta}\big) + O\big(y^{2\beta-2|J_+|}\big)=O\big(y^{\frac{1}{2}+|J_+|}\big)+O\big(y^{\frac{1}{2}-|J_+|}\big)=O\big(y^{\frac{1}{2}-|J_+|}\big).
\end{gather*}
Thus $\psi_1^\pm \in L^2$ near $0$, and $\psi_2^\pm \in L^2$ near $0$ precisely when $|J_+|<1$. As in~\cite{lebedev}, for $z\to \infty$:
\begin{gather*}
\Phi(\alpha,\gamma;z)=O\big(e^zz^{-(\gamma-\alpha)}\big) , \qquad \Psi(\alpha,\gamma;z)=O\big(z^{-\alpha}\big),
\end{gather*}
and as $y\to \infty$,
\begin{gather*}
\psi_1^\pm(y)=O\big(e^{\frac{\sqrt \Lambda}{4\hbar}y^2}y^{-\frac{1}{2}\mp \frac{i}{\hbar \sqrt \Lambda}}\big) , \qquad \psi_2^\pm(y)=O\big(e^{-\frac{\sqrt \Lambda}{4\hbar}y^2}y^{\frac{3}{2}+2|J_+|\pm\frac{i}{\hbar \sqrt \Lambda}}\big) .
\end{gather*}
Thus $\psi^\pm_1\not\in L^2$ near $\infty$ and $\psi^\pm_2\in L^2$ near $\infty$ for all factor orderings. Consequently for $|J_+| \geq 1$, $\mathscr K^{\pm} = \{0\}$ and $n_+(\tilde{H}) = 0 = n_-(\tilde{H})$. For $|J_+|<1$, $\mathscr K^{\pm} = \Span{ \psi_2^\pm }$ and $n_+(\tilde{H}) = 1 = n_-(\tilde{H})$. These analyses support the following result related to the Hamiltonian~\eqref{flat_ham}:

\begin{Theorem}\label{ess_sa_cutoff}
The operator $\tilde H=-\hbar^2\frac{\dd^2}{\dd y^2}+\hbar^2\big(|J_+|^2-\frac{1}{4}\big)\frac{1}{y^2}+\frac{\Lambda}{4}y^2$ is essentially self-adjoint on $C_0^\infty (0,\infty)$ iff $|J_+|\geq 1$.
\end{Theorem}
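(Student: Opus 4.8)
The plan is to invoke Weyl's limit point--limit circle criterion (Theorem~\ref{weylcriterion}). Since $V(y)=\hbar^2\big(|J_+|^2-\frac{1}{4}\big)y^{-2}+\frac{\Lambda}{4}y^2$ is continuous and real-valued on $(0,\infty)$, the criterion applies verbatim: $\tilde H$ is essentially self-adjoint on $C_0^\infty(0,\infty)$ exactly when $V$ is in the limit point case at \emph{both} endpoints $y=0$ and $y=\infty$. I would treat the two endpoints separately, reusing the solution asymptotics already computed above --- fixing $\lambda=i$, the solution space of $-\hbar^2\varphi''+V\varphi=\lambda\varphi$ is spanned by $\psi_1^+$ and $\psi_2^+$.

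For the endpoint at infinity, the recorded asymptotic $\psi_1^+(y)=O\big(e^{\frac{\sqrt\Lambda}{4\hbar}y^2}\,y^{-1/2-i/(\hbar\sqrt\Lambda)}\big)$ shows $\psi_1^+\notin L^2$ near $\infty$, so not every solution is square-integrable there; hence $\infty$ is in the limit point case for \emph{every} factor ordering (equivalently, a continuous potential with $V(y)\to+\infty$ is automatically limit point at $\infty$). Infinity therefore imposes no constraint, and the whole dichotomy is decided at the origin.

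At the origin the leading behaviour of $V$ is $\hbar^2\big(|J_+|^2-\frac{1}{4}\big)y^{-2}$, and a Frobenius analysis of $-\hbar^2\varphi''+\hbar^2\big(|J_+|^2-\frac{1}{4}\big)y^{-2}\varphi=\lambda\varphi$ produces indicial exponents $r_\pm=\frac{1}{2}\pm|J_+|$; this reproduces the already-recorded small-$y$ behaviour $\psi_1^\pm=O\big(y^{1/2+|J_+|}\big)$ and $\psi_2^\pm=O\big(y^{1/2-|J_+|}\big)$. Since $y^r\in L^2$ near $0$ iff $r>-\frac{1}{2}$, the solution with exponent $\frac{1}{2}+|J_+|$ is always square-integrable near $0$, whereas the solution with exponent $\frac{1}{2}-|J_+|$ is square-integrable near $0$ iff $|J_+|<1$. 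Thus all solutions lie in $L^2$ near $0$ iff $|J_+|<1$, i.e., the origin is limit circle iff $|J_+|<1$ and limit point iff $|J_+|\geq 1$. Combining the two endpoints, Weyl's criterion yields essential self-adjointness on $C_0^\infty(0,\infty)$ iff $|J_+|\geq 1$.

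The only genuine subtlety I anticipate is the borderline value $|J_+|=1$: there the indicial roots are $\frac{3}{2}$ and $-\frac{1}{2}$, differing by the integer $2$, so the second solution behaves like $y^{-1/2}$ at leading order (with a subdominant $y^{3/2}\log y$ correction), and one must confirm that it is still not square-integrable at $0$ --- which it is not, since $\int_0 y^{-1}\,\dd y$ diverges. Hence $|J_+|=1$ falls on the limit point side, making the inequality in the statement non-strict. As a consistency check --- and an alternative route --- this whole argument amounts to the assertion that the deficiency subspaces $\mathscr K^\pm=\Span{\psi_2^\pm}$ collapse to $\{0\}$ precisely when $|J_+|\geq 1$, so one could equally well conclude from von Neumann's criterion, Theorem~\ref{EssentialCriterion}(b), without appealing to Weyl's theorem at all.
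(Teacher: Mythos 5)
Your proposal is correct and follows essentially the same route as the paper: both apply Weyl's limit point--limit circle criterion (Theorem~\ref{weylcriterion}) using the square-integrability of the solutions $\psi_1^\pm$, $\psi_2^\pm$ at $0$ and at $\infty$. Your extra care at the borderline $|J_+|=1$ (where the paper's $\Psi(\alpha,\gamma;z)$ formula degenerates for integer $\gamma$) is a welcome refinement, but the argument is the same one.
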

\begin{proof}
We have seen that $\psi^\pm_2(y)$ is square integrable near zero if\/f $|J_+|<1$. Thus, $V(y)=\hbar^2\big(|J_+|^2-\frac{1}{4}\big)\frac{1}{y^2}+\frac{\Lambda}{4}y^2$ is in limit point case at zero if\/f $|J_+|\geq 1$. Since $\psi^\pm_1(y)$ is never square integrable at inf\/inity, $V(y)$ is always in limit point case at inf\/inity. Applying Theorem~\ref{weylcriterion} now establishes the desired result.
\end{proof}

\subsection{Self-adjoint extensions in terms of boundary spaces}\label{Boundaryspaces}

For the case when $|J_+|<1$, we seek to characterize and classify the self-adjoint extensions of $\tilde H$ in terms of boundary conditions as in Theorem~\ref{boundaryspacetheorem}. Our method is similar to that presented in~\cite{fulop}. Suppose that $
\tilde{\mathscr H_b}=\CC $, and $\varphi^{(1)}$, $\varphi^{(2)}$ are solutions to $\tilde H \varphi=0$ with Wronskian $W[\varphi^{(1)},\varphi^{(2)}]=1$ (referred to as \emph{reference modes}). Let $\Gamma_i\colon D(\tilde H^*)\to \CC$ be def\/ined by
\begin{gather*}
\Gamma_i \psi=\lim_{y\to 0^+}W\big[\varphi^{(i)},\psi\big], \qquad \textrm{for} \quad i=1,2 .
\end{gather*}

A straightforward computation establishes that for all $\psi_1,\psi_2\in D(\tilde H^*)$, \begin{gather*}\big\langle \tilde H^* \psi_1, \psi_2\big\rangle-\big\langle \psi_1,\tilde H^* \psi_2 \big\rangle=\langle \Gamma_1 \psi_1,\Gamma_2 \psi_2 \rangle_\CC-\langle \Gamma_2 \psi_1,\Gamma_1 \psi_2 \rangle_\CC ,\end{gather*} and for any $\Psi_1,\Psi_2\in \CC$, one may construct $g\in D(\tilde H^*)$ (by choosing $g_{1,2}\sim C_{1,2}y^{\pm2\nu+\frac{1}{2}}$ and taking $g=g_1+g_2$) satisfying that $\Gamma_i g=\Psi_i$, $i=1,2$.

To determine an explicit formula for the reference modes, we write
\begin{gather*}
\tilde H \varphi = -\hbar^2\varphi''(y)+\left(\frac{\hbar^2q}{y^{2}}+\frac{\Lambda y^2}{4}\right)\varphi(y)=0,
\end{gather*}
where, as before, $q=J_+^2-\frac{1}{4}$. Making the substitutions $w(y) =\frac{\sqrt{\Lambda}}{4 \hbar} y^2$ and $\varphi(y)=w^{1/4} u(w)$ we obtain
\begin{gather*}
w^2u''(w)+wu'(w)-\left(w^2+\left(\frac{J_+}{2}\right)^2\right)u(w)=0.
\end{gather*}
This is a modif\/ied Bessel equation; letting $\nu = \frac{| J_+ |}{2}$, solutions to this equation are given by the modif\/ied Bessel functions of the f\/irst ($I_\nu(w)$) and second ($K_\nu(w)$) kind; their Wronskian satisf\/ies $W[K_\nu(w),I_\nu(w)]=1/w$. Choosing \begin{gather*}
\varphi^{(1)}(y)=\frac{\sqrt y}{\sqrt 2}K_\nu\left(\frac{\sqrt \Lambda}{4\hbar}y^2\right) \qquad \textrm{and} \qquad
\varphi^{(2)}(y)=\frac{\sqrt y}{\sqrt 2}I_\nu\left(\frac{\sqrt \Lambda}{4\hbar}y^2\right),
\end{gather*}
it follows that $W[\varphi^{(1)},\varphi^{(2)}]=1$.

The asymptotic equivalences (for $w\to 0$) of modif\/ied Bessel functions are well-known:
\begin{gather*}
K_\nu (w) \sim \begin{cases} \displaystyle \frac{\pi}{2} \frac{1}{\sin (\nu \pi)} \frac{1}{\Gamma(1-\nu)} \left( \frac{w}{2} \right)^{-\nu} , & \nu \notin \ZZ, \vspace{1mm}\\
\displaystyle - \ln \left( \frac{w}{2} \right) , & \nu = 0,
\end{cases} \qquad
I_\nu (w) \sim \frac{1}{\Gamma (\nu + 1)} \left( \frac{w}{2} \right)^\nu .
\end{gather*}
These yield
\begin{gather}
\label{asymptotpsi1}
\varphi^{(1)}(y) \sim \begin{cases}
\mathscr{C}_1 y^{-2\nu + \frac{1}{2}} , & 0 < |J_+| < 1, \\
-\sqrt 2 y^{1/2} \ln y + k y^{1/2} , & J_+ = 0 ,
\end{cases}
\end{gather}
where
\begin{gather*}
\mathscr{C}_1 = \frac{\pi}{2 \sqrt{2}} \frac{1}{\sin (\nu \pi)} \left( \frac{\sqrt{\Lambda}}{8 \hbar} \right)^{-\nu} \frac{1}{\Gamma ( 1- \nu )}, \qquad
k = - \frac{1}{\sqrt{2} } \ln \left( \frac{\sqrt{\Lambda}}{8 \hbar} \right) ,
\end{gather*}
and
\begin{gather}\label{asymptotpsi2}
\varphi^{(2)} (y) \sim \mathscr{C}_2 y^{2 \nu + \frac{1}{2}} ,
\end{gather}
where
\begin{gather*}
\mathscr{C}_2 = \frac{1}{\sqrt{2}} \frac{1}{\Gamma (\nu + 1)} \left( \frac{\sqrt{\Lambda}}{8 \hbar} \right)^\nu .
\end{gather*}
The corresponding asymptotic behavior of the derivatives is given by
\begin{gather*}
{\varphi^{(1)}}'(y) \sim \begin{cases} \left( -2 \nu + \frac{1}{2} \right) \mathscr{C}_1 y^{-2 \nu - \frac{1}{2}} , & 0 < |J_+| < 1, \\
 -\frac{1}{\sqrt 2} y^{-1/2} \ln y + \left( \frac{k}{2} - \sqrt{2} \right) y^{-1/2} , & J_+=0,
 \end{cases} 
\\
{\varphi^{(2)}}' (y) \sim \left( 2 \nu + \tfrac{1}{2} \right) \mathscr{C}_2 y^{2 \nu - \frac{1}{2}} . 
\end{gather*}
For detailed treatment of modif\/ied Bessel functions and asymptotic behavior we direct the reader to~\cite{lebedev}. Let $\xi\in U(1)$ and $\psi\in D(\tilde H^*)$. Writing $L=\tan(\arg(\xi)/2)$ we have, for $0<|J_+|<1$:
\begin{gather}
(\xi-1) \Gamma_1\psi+i(\xi+1)\Gamma_2\psi\nonumber\\
\qquad{}= \lim_{y\to 0^+}\psi(y)\big[L (2\beta-1 )\mathscr C_1y^{-2\beta}-2\beta\mathscr C_2 y^{2\beta-1}\big]+\psi'(y)\big[L\mathscr C_1y^{1-2\beta}+\mathscr C_2 y^{2\beta}\big],\label{boundaryextension1}
\end{gather}
where as before, $\beta=\frac{1}{4}+\frac{|J_+|}{2}$. For $|J_+|=0$ we have
\begin{gather}
(\xi -1)\Gamma_1\psi+i(\xi+1)\Gamma_2\psi\nonumber\\
\qquad{} =\lim_{y\to 0^+}\!\frac{\psi(y)}{\sqrt y}\big[L\big( \sqrt 2\ln(y)-\tfrac{k}{2}+\sqrt 2\big)\!-\tfrac{1}{2}\mathscr C_2 \big]+\psi'(y)\sqrt y\big[L\big({-}\sqrt 2\ln(y)+k\big)+\mathscr C_2 \big].\!\!\!\!\!\!\!\label{boundaryextension2}
\end{gather}
The self-adjoint extensions of $\tilde H$ are parametrized by $L\in (-\infty,\infty]$ and may thus be characte\-ri\-zed as the restriction of $\tilde H^*$ to functions in $D(\tilde H^*)$ such that the limits in~\eqref{boundaryextension1} and~\eqref{boundaryextension2} (for respective values of~$|J_+|$) are zero.

\subsection{Self-adjoint extensions in terms of asymptotics} \label{Asymptotics}

An alternate approach to characterizing self-adjoint domains for $\tilde H$ involves describing the asymptotic behavior of functions in the natural domain~\eqref{naturaldomain} of $\check H$ and imposing restrictions which force $\lim\limits_{\varepsilon \to 0^+}\lim\limits_{L \to \infty} ( \overline{\psi'} \psi - \overline{\psi}\psi' ) |_{\varepsilon}^L=0$. Combined with the polarization identity, such a restriction causes the integration-by-parts terms obstructing self-adjointness of $\tilde H$ to vanish (see, e.g., \cite[Section~3.2]{gitmantyutin}).

The asymptotic analysis of the natural domain is accomplished by applying methods similar to those used by Gitman et al.~\cite{GitmanTyutinVoronovPaper, gitmantyutin} for the Calogero problem. The condition $\check H \psi \in L^2(\RR^+)$ ef\/fectively describes each $\psi \in D (\tilde H^*)$ as the solution of some dif\/ferential equation
\begin{gather}\label{domaineq}
\check H \psi = \chi, \qquad \chi \in L^2(\RR^+) .
\end{gather}
The general solution to \eqref{domaineq} can be written in terms of the reference modes $\varphi^{(1)}$, $\varphi^{(2)}$ as
\begin{gather}
\psi = c_1 \varphi^{(1)} + c_2 \varphi^{(2)} + \tilde{\psi} , \qquad c_1 ,c_1 \in \CC, \nonumber\\
\tilde \psi= \frac{1}{\hbar^2} \left[ \varphi^{(1)} \int_{a_2}^y \chi \varphi^{(2)} \dd y' - \varphi^{(2)} \int_{a_1}^y \chi \varphi^{(1)} \dd y' \right] ,
\label{gensol}
\end{gather}
since a straightforward computation verif\/ies that $\tilde \psi$ is a particular solution to \eqref{domaineq}.

Utilizing the asymptotic behavior of the reference modes \eqref{asymptotpsi1}, \eqref{asymptotpsi2} (valid for all values of~$|J_+|$, with minor adjustment to coef\/f\/icients) and applying the Cauchy--Schwarz inequality yields
\begin{gather}
\label{asint}
\int_{a_1}^y \chi \tilde{\psi}_1 \dd y' = \begin{cases} O \big(y^{-|J_+|+1}\big) , & |J_+| \neq 1, \\ O \big(\sqrt{\ln (y)}\big) , & |J_+| = 1,
\end{cases} \qquad
\int_{a_2}^y \chi \tilde{\psi}_2 \dd y' = O \big(y^{|J_+|+1}\big) .
\end{gather}
For $|J_+| \neq 0$, we have that when $|J_+|\not=1$, $\tilde \psi=O(y^{3/2})$ and $\tilde \psi'=O(y^{1/2})$; when $|J_+|=1$, $\tilde \psi=O(y^{3/2}\ln (y))$ and $\tilde \psi'=O(y^{1/2}\ln(y))$.

For $J_+ = 0$, we apply analogous asymptotics (relying upon a series relation between $K_0(w)$ and $I_0(w)$ (see, e.g.,~\cite{DLMF})):
\begin{gather*}
K_0 (w(y)) = - \ln \left( \frac{w(y)}{2} \right) I_0 (w(y)) + \sum_{k=0}^\infty \frac{w(y)^{2k}}{2^k (k!)^2} \Psi(k+1) = -2 \big[ \ln(y) I_0 (w(y)) + R(y) \big] ,
\end{gather*}
where
\begin{gather*}
R(y) = \frac{1}{2} \left[ \ln \left( \frac{\sqrt{\Lambda}}{8 \hbar} \right) I_0 (w(y)) - \sum_{k=0}^\infty \frac{w(y)^{2k}}{2^k (k!)^2} \Psi(k+1) \right],
\end{gather*}
and $\Psi$ is the digamma function $\Psi(x) = \frac{\dd}{\dd x} [ \ln (\Gamma(x))]$. Note that $R(y) = O(1)$ as $y \to 0$. Thus we can write (taking $a_1 = 0 = a_2$)
\begin{gather}
\tilde \psi (y) = - \frac{\sqrt{y}}{\hbar^2} I_0 (w(y)) \left[ \ln y \int_0^y \chi(y') \sqrt{y'} I_0 (z(y')) \dd y' - \int_0^y \chi (y') \sqrt{y'} \ln(y') I_0 (w(y')) \dd y' \right] \nonumber\\
\hphantom{\tilde \psi (y) =}{} - \frac{\sqrt{y}}{\hbar^2} \left[ R(y) \int_0^y \chi(y') \sqrt{y'} I_0 (w(y')) \dd y' - I_0 (w(y)) \int_0^y \chi (y') \sqrt{y'} R(y') \dd y' \right].\label{psi-starJ+0}
\end{gather}
To simplify the asymptotic analysis of \eqref{psi-starJ+0} as $y \to 0$, def\/ine
\begin{gather*}
\Theta (y) = \int_0^y \chi(y') \sqrt{y'} I_0 (w(y')) \dd y', \qquad \Theta ' (y) = \chi(y) \sqrt{y} I_0 (w(y)) ,
\end{gather*}
and use the Cauchy--Schwartz inequality as in \eqref{asint} to obtain $\Theta(y) = O(y)$ as $y \to 0$. We can now use integration by parts to rewrite \eqref{psi-starJ+0} as
\begin{gather}
\tilde \psi(y) = \frac{\sqrt{y}}{\hbar^2} I_0 (w(y)) \left[ \int_0^y \ln (y') \Theta'(y') \dd y' - \ln y \Theta(y) \right] \nonumber\\
\hphantom{\tilde \psi(y) =}{} + \frac{\sqrt{y}}{\hbar^2} \left[ I_0 (w(y)) \int_0^y \chi (y') \sqrt{y'} R(y') \dd y' - R(y) \int_0^y \chi(y') \sqrt{y'} I_0 (w(y')) \dd y' \right] \nonumber\\
\hphantom{\tilde \psi(y)}{} = - \frac{\sqrt{y}}{\hbar^2} I_0 (w(y)) \int_0^y \frac{1}{y'} \Theta (y') \dd y' \nonumber\\
\hphantom{\tilde \psi(y) =}{} + \frac{1}{\hbar^2} \left[ A(y) \int_0^y \chi (y') B(y') \dd y' - B(y) \int_0^y \chi(y') A(y') \dd y' \right] ,\label{rewritewTheta}
\end{gather}
where for simplicity $A(y) = \sqrt{y} I_0(w(y))$ and $B(y) = \sqrt{y} R(y)$. Using the fact that $\Theta (y) = O(y)$ and $R(y) = O(1)$, we can once again apply the Cauchy--Schwartz inequality to derive the asymtoptic estimates
\begin{gather*}
\int_0^y \chi A \dd y' = O (y) = \int_0^y \chi B \dd y' ,
\end{gather*}
so that $\tilde \psi = O(y^{3/2})$. Dif\/ferentiating \eqref{rewritewTheta} and applying similar asymptotic analysis yields $\tilde \psi' = O(y^{1/2})$.

When $|J_+| \geq 1$, $\tilde H$ is essentially self-adjoint. Indeed, in this case, \eqref{gensol} implies $c_1 = 0$ since $\psi \in L^2(\RR^+)$; furthermore, the term $c_2 \varphi^{(2)}$ is absorbed into the asymptotic term corresponding to $\tilde \psi$, permitting no freedom in the choice of self-adjoint domain.

When $\tilde H$ is not essentially self-adjoint (i.e., when $|J_+|<1$), a 1-parameter family of pairs $(c_1, c_2)$ indexes the set of self-adjoint extensions. We determine the self-adjoint domains for~$\tilde{H}$ by requiring that
$\lim\limits_{\varepsilon \to 0^+}\lim\limits_{ L \to \infty} ( \overline{\psi'} \psi - \overline{\psi} \psi' )|_{\varepsilon}^L = 0$ ,
or, since $\lim\limits_{L \to \infty} \psi(L) = 0$ and $\lim\limits_{L \to \infty} \psi'(L) = 0$, that
$\lim\limits_{\varepsilon \to 0} ( \overline{\psi'} \psi - \overline{\psi}\psi' )|_{\varepsilon} = 0$. The asymptotic equivalences for $0<|J_+|<1$
\begin{gather*}
\psi \sim c_1 \mathscr{C}_1 y^{-|J_+|+ \frac{1}{2}} + c_2 \mathscr{C}_2 y^{|J_+| + \frac{1}{2}} + O \big( y^{\frac{3}{2}} \big), \\
\psi' \sim c_1 \mathscr{C}_1 \big({-} |J_+| + \tfrac{1}{2} \big) y^{-|J_+| - \frac{1}{2}} + c_2 \mathscr{C}_2 \big( |J_+|+\tfrac{1}{2} \big)y^{|J_+|-\frac{1}{2}} + O\big(y^{\frac{1}{2}}\big)
\end{gather*}
imply that $\lim\limits_{\varepsilon \to 0} ( \overline{\psi'} \psi - \overline{\psi} \psi' )|_\varepsilon = - 2 |J_+| \mathscr{C}_1 \mathscr{C}_2 ( \overline c_1 c_2 - c_1 \overline c_2 )$, so we must require $\overline c_1 c_2 - c_1 \overline c_2 = 0$ to obtain a domain on which $\tilde{H}$ is self adjoint. For orderings satisfying $J_+=0$, we have instead the asymptotic equivalences
\begin{gather*}
\psi \sim c_1 \big( {-}\sqrt 2 y^{\frac{1}{2}} \ln (y) - ky^{\frac{1}{2}} \big) + c_2 \mathscr{C}_2 y^{\frac{1}{2}} + O\big(y^{\frac{3}{2}}\big), \\
\psi' \sim c_1 \big( {-}\tfrac{1}{\sqrt 2} y^{-\frac{1}{2}} \ln (y) + \big(\tfrac{k}{2} - \sqrt{2} \big) y^{-\frac{1}{2}} \big) + c_2 \frac{1}{2} \mathscr{C}_2 y^{-\frac{1}{2}} + O\big(y^{\frac{1}{2}}\big) ,
\end{gather*}
where as before, $k=-\frac{1}{\sqrt 2}\ln\big(\frac{\sqrt \Lambda}{8\hbar}\big)$. These similarly yield $\lim\limits_{\varepsilon \to 0} ( \overline{\psi'} \psi - \overline{\psi} \psi' ) |_\varepsilon =- \sqrt{2} \mathscr{C}_2 ( \overline{c_1} c_2 - c_1 \overline{c_2} ) / \hbar$. Thus $\overline{c_1} c_2 = c_1 \overline{c_2}$, implying $\overline{c_1} c_2 \in \RR$, and hence $c_2$ is a real multiple of $c_1$, since
\begin{gather*}
\alpha c_1 = (\overline{c_2} c_1) c_1 = ( \overline{c_1} c_2 ) c_1 = ( \overline{c_1} c_1 ) c_2 = \beta c_2 , \qquad \alpha , \beta \in \RR.
\end{gather*}
Note that we can def\/ine $\theta$ such that $\tan \theta = \frac{\beta}{\alpha}$, yielding $c_1 \cos (\theta) = c_2 \sin (\theta)$, and allowing us to write $\psi$ in the form
\begin{gather}
\psi =C \big( \sin (\theta) \varphi^{(1)} + \cos (\theta) \varphi^{(2)} \big) + O\big(y^{\frac{3}{2}}\big) ,\label{asymptotics_theta}
\end{gather}
implying the asymptotics
\begin{gather}
\psi\sim\begin{cases}C\big(\sin(\theta)\mathscr C_1y^{-|J_+|+\frac{1}{2}}+\cos(\theta)\mathscr C_2y^{|J_+|+\frac{1}{2}}\big)+O\big(y^{\frac{3}{2}}\big),&0<|J_+|<1,\\
C\big(\sin(\theta)\big({-}\sqrt 2y^{\frac{1}{2}}\ln(y)+ky^{\frac{1}{2}}\big)
+\cos(\theta)\mathscr C_2y^{\frac{1}{2}}\big)+O\big(y^{\frac{3}{2}}\big),&|J_+|=0 ,
\end{cases}\label{thetasymptot}
\end{gather}
and similarly for $\psi'$. Taking $\theta=\tan^{-1}(L)$ we observe that this parametrization of the self-adjoint extensions for $\tilde H$ may be recast in the form as in \eqref{boundaryextension1} and \eqref{boundaryextension2}.

\section[Formulation of the propagator and path-integral representation of the fundamental solution]{Formulation of the propagator and path-integral\\ representation of the fundamental solution}\label{FK}

In this section, we use extensions of the Trotter product formula due to Kato and to Exner, Neidhardt, and Zagrebnov to derive Feynman--Ka\v{c} formulae (i.e., path integral representations) for the time-evolution operators $e^{-it \tilde H/\hbar}$ and $e^{-t \tilde H/\hbar}$ evolving initial data according to the real- and imaginary-time Schr\"{o}dinger equations, respectively. By factoring the time evolution ope\-ra\-tor $\mathscr{U}_t = e^{-a t \tilde H/\hbar}$ $(a= i,1)$ as the limit of the concatenation of short-time evolution operators $\big( e^{-a t \hat T / \hbar n} e^{- at \hat V / \hbar n} \big)^n$,
we can write a path-integral expression for the propagator (integral kernel) of $e^{-at \tilde H} = e^{-a t(\tilde T + \tilde V)}$ by using the known integral kernel for $e^{- at \tilde T /\hbar n} = e^{a \hbar t \Delta/n}$.

The Trotter product formula and its extensions require that the operators $\tilde T$ and $\tilde V$ be self-adjoint \cite{hallquant}, so the domain of $\tilde T = -\hbar^2 \frac{d^2}{dy^2}$ must correspond to one of the self-adjoint extensions of the Laplacian on the half-line; namely, $\tilde T$ must be taken to be $\tilde{T}_{\beta}$, $\beta \in (-\infty, \infty]$, where
\begin{gather*}
\tilde{T}_\beta = -\hbar^2 \Delta = -\hbar^2 \frac{\dd^2}{\dd y^2}, \\
D(\hat{T}_\beta)= \big\{ \psi \in D^*_{\check{T}} \, | \, \lim_{\varepsilon \to 0} [ \psi(\varepsilon) - \beta \psi' (\varepsilon) ] =0 \big\}
\end{gather*}
(as usual, $\beta = \infty$ denotes the Neumann boundary condition $\psi'(0)=0$).

The familiar Trotter product formula further requires that $\tilde H = \tilde{T}_\beta + \tilde V$ be essentially self-adjoint on $D(\tilde T_\beta) \cap D(\tilde V)$, a problematic condition in the presence of multiple self-adjoint extensions of $\tilde H$ for $|J_+|<1$. The extended results of Kato and of Exner et al do not require essential self-adjointness of $\tilde{T}_\beta + \tilde V$, but instead demand that $\tilde{T}_\beta$ and $\tilde V$ be nonnegative. Consequently their application is limited to factor orderings such that $|J_+|\geq \frac{1}{2}$ and hence $V (y) = \hbar^2\big(|J_+|^2-\frac{1}{4}\big)\frac{1}{y^2}+\frac{\Lambda}{4}y^2 \geq 0$.

The following subsections construct Feynman--Ka\v{c} formulas in real and imaginary time, for the cases $|J_+| \geq \frac{1}{2}$. The case $|J_+|= \frac{1}{2}$ is special because for such factor orderings, the quantum potential term $\hbar q y^{-2} = \hbar \big(J_+^2 - \frac{1}{4}\big) y^{-2}$ in $V(y)$ vanishes. For $|J_+|< \frac{1}{2}$, neither the standard Trotter product formula nor extensions due to Kato and to Exner et al apply. In the case of the extensions, the operator $\tilde V$ fails to be nonnegative, while the standard Trotter product theorem does not apply since $D(\tilde{T}_\beta) \cap D(\tilde V)$ does not determine one of the self-adjoint extensions of $\tilde H$. Indeed, functions $\psi \in D(\tilde V)$ satisfy
\begin{gather}
\int_0^\infty y^{-4} \lvert \psi \rvert ^2 \dd y < \infty ,\label{domain_V}
\end{gather}
while elements of the self-adjoint extension of $\tilde H$ corresponding to the angle $\theta$ in $\Span{\varphi^{(1)},\varphi^{(2)}}$ behave asymptotically as \eqref{thetasymptot} near $y=0$. For $|J_+|<\frac{1}{2}$, the condition~\eqref{domain_V} can only be satisf\/ied when $C=0$, meaning that $D(\tilde{T}_\beta) \cap D(\tilde V)$ satisf\/ies the boundary conditions of all self-adjoint extensions of $\tilde H$, and thus $\tilde H$ is not essentially self-adjoint on $D(\tilde{T}_\beta) \cap D(\tilde V)$. In fact, \eqref{domain_V} can only be satisf\/ied when $C=0$ for all $|J_+|<1$ (where $|J_+| \neq \frac{1}{2}$, since at $|J_+|=\frac{1}{2}$ \eqref{domain_V} does not apply), so the standard Trotter product formula is inapplicable for all such orderings.

Complementary to the path integral approach to constructing a time evolution operator one may use spectral decomposition of the Hamiltonian to def\/ine propagators as in \cite{PatelRiveraMaitra}.

\subsection{Imaginary-time Feynman--Ka\v{c} formula}

\subsubsection[Factor orderings with $|J_+| > \frac{1}{2}$]{Factor orderings with $\boldsymbol{|J_+| > \frac{1}{2}}$}

For factor orderings in this range, both $\tilde{T}_\beta$ and $\tilde V$ are nonnegative self-adjoint operators. In addition, the intersection of their form domains $Q(\tilde{T}_\beta) \cap Q(\tilde V)$ (equivalently, $D(\tilde{T}_\beta ^{1/2}) \cap D(\tilde V ^{1/2})$) is dense (since it contains $C_0^\infty (\RR^+)$). Thus the operators $\tilde{T}_\beta$ and $\tilde V$ satisfy the hypotheses of the strong Kato--Trotter product formula (see, e.g.,~\cite[Theorem~S.21]{reedsimon1}). We conclude that the quadratic form $\langle \varphi , \tilde{T}_\beta \psi \rangle_{L^2(\RR^+)} + \langle \varphi , \tilde V \psi \rangle_{L^2(\RR^+)}$ is closed on the domain $Q(\tilde{T}_\beta ) \cap Q(\tilde V)$, and hence is associated with a self-adjoint operator, denoted $\tilde{T}_\beta \dot{+} \tilde V$ (the \emph{form sum} of $\tilde{T}_\beta$ and~$\tilde V$). The product formula
\begin{gather}
 e^{-t (\tilde{T}_\beta \dot{+} \tilde V )} =\slim_{n \to \infty} \big( e^{-t \tilde{T}_\beta / n} e^{-t \tilde V / n} \big)^n \label{trotter_product}
\end{gather}
then holds.

We f\/irst investigate the domain $D(\tilde{T}_\beta \dot{+} \tilde V ) \subseteq Q(\tilde{T}_\beta ) \cap Q(\tilde V)$ for the self-adjoint ope\-ra\-tor~$\tilde{T}_\beta \dot{+} \tilde V$. Observe that since
\begin{gather*}
Q(\tilde V) = \left\{ \psi \in L^2 (\RR^+ ) \, \big | \, \int_0^\infty V(y) \lvert \psi (y) \rvert^2 \dd y < \infty \right\} ,
\end{gather*}
every element $\psi \in D(\tilde{T}_\beta \dot{+} \tilde V)$ must satisfy
\begin{gather}
\int_0^\infty y^{-2} \lvert \psi(y) \rvert^2 \dd y < \infty .\label{form_domain_asymptotics}
\end{gather}

For $|J_+| \geq 1$, there is only one self-adjoint domain for $\tilde H$, and the asymptotic analysis of Section~\ref{Asymptotics} implies that the elements satisfy $\psi \sim O(y^{\frac{3}{2}}) $ near $y=0$, consistent with~\eqref{form_domain_asymptotics}; indeed, $D(\tilde{T}_\beta \dot{+} \tilde V)$ must coincide with the single self-adjoint extension of $\tilde H$ (for all values of $\beta$). For $\frac{1}{2} < |J_+| <1$, the self-adjoint extensions of $\tilde H$ are indexed by one-dimensional subspaces of $\Span{\varphi^{(1)} , \varphi^{(2)}}$, with each subspace determined by $\theta$ as in~\eqref{asymptotics_theta}, so that elements of $D(\tilde{T}_\beta \dot{+} \tilde V)$ must be characterized by a particular angle $\theta = \theta'$. Combining~\eqref{form_domain_asymptotics} and~\eqref{asymptotics_theta}, we obtain
\begin{gather}
\int_0^1 y^{-2} \big[\sin (\theta') \varphi^{(1)}(y) + \cos ( \theta') \varphi^{(2)} (y)\big]^2 \dd y < \infty .
\label{form_domain_restrict}
\end{gather}
Since $\varphi^{(1)} \sim \mathscr{C}_1 y^{-|J_+|+ \frac{1}{2}}$, $\varphi^{(2)} \sim \mathscr{C}_2 y^{|J_+|+\frac{1}{2}}$, \eqref{form_domain_restrict} can only be satisf\/ied if $\theta' = 0$, identifying $\tilde{T}_\beta \dot{+} \tilde V$ as the self-adjoint extension of $\tilde H$ purely in $\Span{\varphi^{(2)}}$ (for all $\beta$).

For $|J_+|>\frac{1}{2}$ the conditions of the Kato--Trotter product formula identify a single self-adjoint extension of $\tilde H$ for which one may construct a time-sliced path integral, regardless of the choice of~$\tilde{T}_\beta$. Thus without loss of generality we f\/ix $\tilde{T}_0$ (corresponding to the Dirichlet boundary condition on the Laplacian) as the self-adjoint extension for $\tilde T$.

We now combine the Kato--Trotter product formula with the well-known Dirichlet heat kernel on $\RR^+$ (see, e.g., \cite{ClarkMenikoffSharp, FarhiGutmann, SaloffCosteLecture}) to construct a Feynman--Ka\v{c} formula for the imaginary-time evolution operator $e^{-t \tilde{H}/ \hbar}$ (guaranteed by the spectral theorem to be well-def\/ined for self-adjoint $\tilde{H}$). The Dirichlet heat kernel on the half-line, given by
\begin{gather}
p_+^D(t,y,z) = \frac{1}{\sqrt{4\pi t}} \big[ e^{-(y-z)^2/4t} - e^{-(y+z)^2/4t} \big] ,\label{DirichletHeatKernel}
\end{gather}
is the integral kernel for $e^{t \Delta_0}$, and may be obtained by applying the ordinary heat kernel on $\RR$ to an odd extension of data $\psi: \RR^+ \to \CC$ (with $\psi(0) = 0$). The heat kernel~\eqref{DirichletHeatKernel} preserves the Dirichlet boundary condition for all time. From~\eqref{trotter_product} and~\eqref{DirichletHeatKernel}, we obtain
\begin{gather}
e^{-t \tilde H / \hbar} \psi = \lim_{n \to \infty} \big( e^{-t \tilde{T}_0 /\hbar n} e^{-tV(y) / \hbar n} \big)^n \psi = \lim_{n \to \infty} \big( e^{\hbar t \Delta/n} e^{-tV(y) / \hbar n} \big)^n \psi \nonumber\\
\hphantom{e^{-t \tilde H / \hbar} \psi}{} = \lim_{n \to \infty} \mathcal{N} \int_{\RR^+} \!\cdots\! \int_{\RR^+} \prod_{j=1}^n \Big[ \Big( e^{ \frac{-n (z_{j+1} - z_j)^2}{4 \hbar t}} - e^{\frac{-n (z_{j+1} + z_j)^2}{4 \hbar t}} \Big) e^{\frac{-tV(z_j)}{ \hbar n}} \Big] \psi(z_1) \dd z_1 \cdots \dd z_n , \nonumber\\
 \mathcal{N} = \left( \frac{n}{4 \pi \hbar t} \right)^{n/2} .\label{time_slice_Dirichlet}
\end{gather}

The path integrals corresponding to dif\/ferent factor orderings are distinguished by the quantum potential $\hbar^2 q y^{-2} = \hbar^2\big( J_+^2 - \frac{1}{4}\big) y^{-2}$, so that we may write the factor
\begin{gather*}
\prod_{j=1}^n e^{-tV(z_j)/ \hbar n} = \prod_{j=1}^n e^{-t \Lambda z_j^2 / 4 \hbar n} \prod_{k=1}^n e^{-t \hbar q z_k^{-2}/n}
\end{gather*}
and regard the latter product on the right-hand side as a weight on the path integral measure corresponding to the factor ordering chosen. Notice that this weight is small for histories tarrying near the singularity $y=0$. A large value of $q$ in the quantum potential intensif\/ies this ef\/fect.

The $n$-fold time-sliced measure
\begin{gather*}
\left( \frac{n}{4 \pi \hbar t} \right)^{n/2} \prod_{j=1}^n \Big( e^{\frac{-n (z_{j+1} - z_j)^2}{4 \hbar t}} - e^{\frac{-n (z_{j+1} + z_j)^2}{4 \hbar t}} \Big) \dd z_1 \cdots \dd z_n
\end{gather*}
represents an \emph{avoiding measure} supported on paths away from the singularity $y=0$. As discussed by Farhi and Gutmann in~\cite{FarhiGutmann}, the f\/irst term of~\eqref{DirichletHeatKernel} is the heat kernel for propagation from $z>0$ to $y>0$ on $\RR$, but to construct an avoiding measure on the half-line $\RR^+$, this must be adjusted by subtracting the weight for paths from $y>0$ to $z>0$ incident on the origin. The appropriate weight to subtract is precisely the second term of \eqref{DirichletHeatKernel}, because paths from $y>0$ to $z>0$ intersecting the origin are in one-to-one correspondence with paths from~$y$ to $-z$ (for a full treatment, refer to \cite[Section~2]{FarhiGutmann}.)

We conclude that by introducing a quantum potential, factor orderings in the range~$|J_+| \geq 1$ require that the path integral measure be supported on paths avoiding the singularity. A~larger value of~$q$ results in a decreased contribution by histories for which the circumference of the universe remains mostly small. For orderings with $\frac{1}{2} < |J_+| < 1$, the extended product formulae limit the construction of a time-sliced path integral to one whose measure is supported on paths avoiding the singularity.

\subsubsection[Factor orderings with $|J_+|=\frac{1}{2}$]{Factor orderings with $\boldsymbol{|J_+|=\frac{1}{2}}$}

As in the preceding section $\tilde{T}_\beta$ and $\tilde V$ are nonnegative self-adjoint operators with \smash{$Q(\tilde{T}_\beta) \cap Q(\tilde V)$} dense, so that once again the hypotheses of the strong Kato--Trotter product formula are sa\-tisf\/ied, for $\beta \in (-\infty , \infty]$. Thus $\tilde{T}_\beta \dot{+} \tilde V$ is self-adjoint on $D(\tilde{T}_\beta \dot{+} \tilde V) \subseteq Q(\tilde{T}_\beta) \cap Q(\tilde V)$, and the product formula~\eqref{trotter_product} holds.

We wish to identify $D(\tilde{T}_\beta \dot{+} \tilde V)$ as one of the self-adjoint extensions of $\tilde H$, corresponding to $\theta $ in $\Span{\varphi^{(1)} , \varphi^{(2)}}$. However, the absence of the quantum potential (since $q=J_+^2 - \frac{1}{4}=0$) removes the condition \eqref{form_domain_asymptotics}; membership of $\psi$ in $Q(\tilde V)$ no longer imposes any restriction on asymptotics near $y=0$, and instead we utilize that $D(\tilde{T}_\beta) \cap D(\tilde V) \subseteq D(\tilde{T}_\beta \dot{+} \tilde V)$ (see, e.g., \cite[Proposition~3.1]{Faris}). Thus the corresponding self-adjoint extension of $\tilde H$ contains elements satisfying the boundary condition $\psi(0)=\beta \psi'(0)$. Since the asymptotic analysis of Section~\ref{Asymptotics} yields (for $|J_+|=\frac{1}{2}$),
\begin{gather}
\psi \sim C \big( \mathscr{C}_1 \sin(\theta) + \mathscr{C}_2 \cos(\theta ) y \big) + O\big(y^{3/2}\big), \nonumber\\
\psi' \sim C \big( \mathscr{C}_2 \cos(\theta) \big) + O\big(y^{1/2}\big) ,\label{Jhalf_asymptotics}
\end{gather}
we conclude that $\beta$ determining the self-adjoint extension of the Laplacian is related to $\theta$ for the self-adjoint extension of $\tilde H$ by
\begin{gather}
\beta= \frac{\mathscr{C}_1}{\mathscr{C}_2} \tan \theta .\label{betatheta}
\end{gather}
The Dirichlet boundary condition $\beta = 0$ corresponds to $\theta = 0$, and as in the preceding section, we use~\eqref{DirichletHeatKernel} with the Kato--Trotter product formula to obtain the same time-sliced path integral expression for $e^{-t \tilde{H}/ \hbar}$ (cf.~\eqref{time_slice_Dirichlet}).

The Neumann boundary condition $\psi'(0)=0$ (formally $\beta = \infty$) corresponds to $\theta=\frac{\pi}{2}$. In this case $e^{-t \tilde{T}_\infty /n}$ in~\eqref{trotter_product} may be expressed in terms of the well-known Neumann heat kernel on~$\RR^+$ (see, e.g.,~\cite{ClarkMenikoffSharp, FarhiGutmann, GaveauSchulman, SaloffCosteLecture}):
\begin{gather}
p_+^N(t,y,z) = \frac{1}{\sqrt{4 \pi t}} \big[ e^{-(y-z)^2/4t} + e^{-(y+z)^2/4t} \big] ,\label{NeumannHeatKernel}
\end{gather}
derived by evenly extending data on $\RR^+$ (with $\psi'(0)=0$) to $\RR$ and evolving via the ordinary heat kernel. Similarly to the case for Dirichlet boundary conditions, we combine~\eqref{NeumannHeatKernel} with the Kato--Trotter product formula to obtain the time-sliced path integral propagator:
\begin{gather*}
e^{-t \tilde H / \hbar} \psi = \lim_{n \to \infty} \big( e^{-t \tilde{T}_\infty /\hbar n} e^{-tV(y) / \hbar n} \big)^n \psi = \lim_{n \to \infty} \big( e^{\hbar t \Delta/n} e^{-tV(y) / \hbar n} \big)^n \psi \\
\hphantom{e^{-t \tilde H / \hbar} \psi }{}
= \lim_{n \to \infty} \mathcal{N} \int_{\RR^+}\! \cdots\! \int_{\RR^+} \prod_{j=1}^n \Big[ \Big( e^{\frac{-n (z_{j+1} - z_j)^2}{4 \hbar t}} + e^{\frac{-n (z_{j+1} + z_j)^2}{4 \hbar t}} \Big) e^{\frac{-tV(z_j)}{\hbar n}} \Big] \psi(z_1) \dd z_1 \cdots \dd z_n , \\
\mathcal{N} = \left( \frac{n}{4 \pi \hbar t} \right)^{n/2} .
\end{gather*}

In contrast to the singularity-avoiding measure generated by the Dirichlet heat kernel, the Neumann $n$-fold time-sliced measure \begin{gather*}\left( \frac{n}{4 \pi \hbar t} \right)^{n/2} \prod_{j=1}^n \Big( e^{\frac{-n (z_{j+1} - z_j)^2}{4 \hbar t}} + e^{\frac{-n (z_{j+1} + z_j)^2}{4 \hbar t}} \Big) \dd z_1 \cdots \dd z_n\end{gather*} corresponds to a \emph{reflecting measure} supported on paths which may strike the singularity on multiple occasions. As discussed in~\cite{FarhiGutmann}, the f\/irst term in~\eqref{NeumannHeatKernel} yields the weight for paths striking the origin an even number of times, since these paths are in one-to-one correspondence with paths from $y>0$ to $z>0$ in the whole line. The second term in~\eqref{NeumannHeatKernel} supplements the f\/irst by weighting paths which strike the origin an odd number of times, since the same correspondence associates such paths with paths in the whole line from $y>0$ to $-z<0$.

We have derived path integral representations for the imaginary-time evolution operators corresponding to two of the self-adjoint extensions of $\tilde H$ ($\theta=0$ and $\theta=\frac{\pi}{2}$) when $|J_+|=\frac{1}{2}$, by using the heat kernel for Dirichlet and Neumann boundary conditions, respectively. For the remaining extensions, the heat kernel for the boundary condition $\psi(0)=\beta \psi'(0)$ may be constructed using spectral decomposition of~$-\Delta$ with corresponding boundary behavior. The resulting eigenfunctions are $\psi_p(y) = \big( \frac{2}{\pi} \big)^{1/2} \cos(py+ \phi_p)$, where $\tan \phi_p = -1/ p\beta$ (see \cite{ClarkMenikoffSharp,FarhiGutmann, GaveauSchulman}), yielding the heat kernel
\begin{gather*}
p_+^{(\beta)}(t,y,z) = \frac{1}{\sqrt{4 \pi t}} \big[ e^{-(y-z)^2/4t} + e^{-(y+z)^2/4t} \big] \\
\hphantom{p_+^{(\beta)}(t,y,z) = }{} - \beta^{-1} \exp \left\{ \frac{t}{\beta^2} + \frac{(y+z)}{\beta} \right\} \operatorname{erfc} \left[ \left( \frac{t}{\beta^2} \right)^{1/2} + \left( \frac{y+z}{2 \beta} \right) \left( \frac{t}{\beta^2} \right)^{-1/2} \right] .
\end{gather*}
The term containing the complementary error function $\operatorname{erfc}(\cdot)$ decreases as $\beta$ ranges from~$0$ to~$\infty$, and may be regarded as interpolating between the avoiding and ref\/lecting measures.

The time-sliced path integral expression for the imaginary-time evolution operator may thus be constructed as
\begin{gather*}
e^{-t \tilde H / \hbar} \psi = \lim_{n \to \infty} \int_{\RR^+} \cdots \int_{\RR^+} \prod_{j=1}^n \left[ p_+^{(\beta)} \left( \frac{\hbar t}{n} , z_{j+1} , z_j \right) e^{-tV(z_j)/ \hbar n} \right] \psi(z_1) \dd z_1 \cdots \dd z_n .
\end{gather*}

\subsection{Real-time Feynman--Ka\v{c} formula}

The Trotter product formula for real time is generally applicable to negative as well as non\-ne\-ga\-tive operators. However as noted above, when the operator $\tilde V$ is negative (i.e., when $|J_+|<\frac{1}{2}$), the intersection of the domains $D(\tilde{T}_\beta) \cap D(\tilde V)$ is too small to determine a unique self-adjoint extension of $\tilde{H}$, and the standard Trotter product formula does not apply.

When both $\tilde{T}_\beta$ and $\tilde V$ are nonnegative ($|J_+| \geq \frac{1}{2}$), a result due to Exner et al.\ (see \cite[Theorem~2.2]{ExnerNeidhardtZagrebnov}) allows us to conclude that a version of the Trotter product formula in real time holds with respect to the $L^2$ norm over time. That is, for nonnegative self-adjoint $\tilde{T}_\beta$ and $\tilde V$ with densely def\/ined form sum $\tilde{T}_\beta \dot{+} \tilde{V}$ and for any time $T>0$ and any state $\psi \in L^2(\RR^+)$,
\begin{gather}
\lim_{n \to \infty} \int_{-T}^T \Big\| \big( e^{-it \tilde{T}_\beta / \hbar n} e^{-it \tilde V / \hbar n} \big)^n \psi - e^{-it(\tilde{T}_\beta \dot{+} \tilde{V}) / \hbar } \psi \Big\|_2^2 \dd t = 0 .\label{exnerproduct}
\end{gather}
As discussed in \cite{ExnerNeidhardtZagrebnov}, this result is weaker than the standard Trotter product formula; however it implies the existence of a subsequence $\{n_k\}$ along which convergence in the strong operator topology holds. This is suf\/f\/icient to def\/ine a time-sliced path integral.

The conditions of the extended Trotter product formula \eqref{exnerproduct} are identical to those of the strong Kato--Trotter product formula. Hence an identical analysis determines the self-adjoint extension of $\tilde H$ admitting a time-sliced path integral expression. As before, for $|J_+| \geq 1$, $\tilde{T}_0 \dot{+} \tilde V$ coincides with the unique self-adjoint extension of $\tilde H$, while for $\frac{1}{2}<|J_+|<1$, the restriction~\eqref{form_domain_asymptotics} picks out the $\theta = 0$ extension (Dirichet boundary condition) as that to which~\eqref{exnerproduct} applies. For $|J_+|=\frac{1}{2}$, the correspondence \eqref{betatheta} equates $D(\tilde{T}_\beta \dot{+} \tilde V)$ to the self-adjoint domain for $\tilde H$ with asymptotics given by~\eqref{Jhalf_asymptotics}, and each extension admits a Feynman--Ka\v{c} formula for the real-time evolution operator.

For orderings satisfying $|J_+| \geq 1$ or $|J_+| = \frac{1}{2}$, we have that $D(\tilde{T}_\beta) \cap D(\tilde V)$ is large enough to determine a unique self-adjoint extension of $\tilde H$, implying that the standard Trotter product formula applies.

\subsubsection[Factor orderings with $|J_+| \geq 1$]{Factor orderings with $\boldsymbol{|J_+| \geq 1}$}
For factor orderings with $|J_+| \geq 1$, the self-adjoint form sum $\tilde{T}_0 \dot{+} \tilde V$ must coincide with the unique self-adjoint extension of $\tilde H$. At the same time, $ D(\tilde{T}_0) \cap D(\tilde V) \subseteq D(\tilde{T}_0 \dot{+} \tilde V)$ and $D(\tilde{T}_0) \cap D(\tilde V)$ is dense (since it contains $C_0^\infty (\RR^+)$), so that $\tilde H = \tilde{T}_0 + \tilde V$ is essentially self-adjoint on $D(\tilde{T}_0) \cap D(\tilde V)$. Thus the standard Trotter product formula applies, and together with the well-known integral kernel \cite{ClarkMenikoffSharp, FarhiGutmann, GaveauSchulman}
\begin{gather*}
g_+^D(t,y,z) = \frac{1}{\sqrt{4 \pi it}} \big[ e^{i(y-z)^2/4t} - e^{i(y+z)^2/4t} \big]
\end{gather*}
for $e^{it \Delta}$ permits a real-time Feynman--Ka\v{c} formula for $e^{-it \tilde{H} / \hbar}$:
\begin{gather}
e^{-it \tilde H / \hbar} \psi = \lim_{n \to \infty} \big( e^{-it \tilde{T}_0 /\hbar n} e^{-itV(y) / \hbar n} \big)^n \psi = \lim_{n \to \infty} \big( e^{i\hbar t \Delta/n} e^{-itV(y) / \hbar n} \big)^n \psi \nonumber\\
\hphantom{e^{-it \tilde H / \hbar} \psi}{} = \lim_{n \to \infty} \mathcal{N} \int_{\RR^+} \!\cdots\! \int_{\RR^+} \prod_{j=1}^n \Big[ \Big( e^{\frac{in (z_{j+1} - z_j)^2}{4 \hbar t}} - e^{\frac{in (z_{j+1} + z_j)^2}{4 \hbar t}} \Big) e^{\frac{-itV(z_j)}{ \hbar n}} \Big] \psi(z_1) \dd z_1 \cdots \dd z_n , \nonumber\\
\mathcal{N} = \left( \frac{n}{4 \pi i \hbar t} \right)^{n/2} .\label{real_time_slice_Dirichlet}
\end{gather}
As in \eqref{time_slice_Dirichlet}, the $n$-fold measure appearing in \eqref{real_time_slice_Dirichlet} corresponds to the avoiding measure supported on paths which do not intersect the singularity.

\subsubsection[Factor orderings with $\frac{1}{2}<|J_+|<1$]{Factor orderings with $\boldsymbol{\frac{1}{2}<|J_+|<1}$}

As discussed above, in this range of orderings the ordinary Trotter product formula does not apply, but the result~\eqref{exnerproduct} of~\cite{ExnerNeidhardtZagrebnov} permits a Feynman--Ka\v{c} formula. As in the imaginary-time case, the identif\/ication of $D(\tilde{T}_\beta \dot{+} \tilde V)$ with the Dirichlet self-adjoint domain of $\tilde H$ ($\theta =0$), independent of $\beta$, implies that without loss of generality, $\tilde{T}_0$ may be taken as the kinetic term. The Feynman--Ka\v{c} formula will thus be as in~\eqref{real_time_slice_Dirichlet}, with the limit understood to be taken over the subsequence~$\{ n_k \}$ for which strong operator convergence of the product formula is guaranteed.

\subsubsection[Factor orderings with $|J_+| = \frac{1}{2}$]{Factor orderings with $\boldsymbol{|J_+| = \frac{1}{2}}$}
As discussed in the imaginary-time case, for $|J_+| = \frac{1}{2}$ the intersection $D(\tilde{T}_\beta) \cap D(\tilde V)$ is contained in a unique self-adjoint extension of~$\tilde{H}$, with the correspondence between $\beta$ and the angle $\theta$ of the self-adjoint extension given by~\eqref{betatheta}. Since $C_0^\infty(\RR^+) \subset D(\tilde{T}_\beta) \cap D(\tilde V)$ is dense, $\tilde H$ is essentially self-adjoint on~$D(\tilde{T}_\beta) \cap D(\tilde V)$ and the standard Trotter product formula again applies.

The integral kernel for $e^{it\Delta}$ with boundary condition $\psi(0) = \beta \psi'(0)$ is given by \cite{ClarkMenikoffSharp, FarhiGutmann, GaveauSchulman}
\begin{gather}
 g_+^{(\beta)} (t,y,z) = \frac{1}{\sqrt{4 \pi i t}} \big[ e^{i(y-z)^2/4t} + e^{i(y+z)^2/4t} \big] \nonumber\\
\hphantom{g_+^{(\beta)} (t,y,z) =}{}
- \beta^{-1} \exp \left\{ \frac{it}{\beta^2} + \frac{(y+z)}{\beta} \right\} \operatorname{erfc} \left[ \left( \frac{it}{\beta^2} \right)^{1/2} + \left( \frac{y+z}{2 \beta} \right) \left( \frac{it}{\beta^2} \right)^{-1/2} \right] .\label{realtime_kernel}
\end{gather}
Using \eqref{realtime_kernel}, the time-sliced path integral expression for the real-time evolution operator becomes
\begin{gather*}
e^{-it \tilde H / \hbar} \psi = \lim_{n \to \infty} \int_{\RR^+} \cdots \int_{\RR^+} \prod_{j=1}^n \left[ g_+^{(\beta)} \left( \frac{\hbar t}{n} , z_{j+1} , z_j \right) e^{-itV(z_j)/ \hbar n} \right] \psi(z_1) \dd z_1 \cdots \dd z_n .
\end{gather*}
As with imaginary time, the special cases $\beta=0$ and $\beta=\infty$ correspond to the Dirichlet and Neumann boundary conditions, with
\begin{gather*}
g_+^{(\beta)} (t,y,z) = \begin{cases} \dfrac{1}{\sqrt{4 \pi it}} \big[ e^{i(y-z)^2/4t} - e^{i(y+z)^2/4t} \big] , & \beta = 0, \vspace{1mm}\\
\dfrac{1}{\sqrt{4 \pi it}} \big[ e^{i(y-z)^2/4t} + e^{i(y+z)^2/4t} \big] , & \beta = \infty .
\end{cases}
\end{gather*}
The interpretation of the path integral measures as supported on avoiding and ref\/lecting paths for the Dirichlet and Neumann cases, respectively, and of the $\operatorname{erfc}(\cdot)$ term in the integral ker\-nel~\eqref{realtime_kernel} as interpolating between these two cases, remains as in imaginary time.

\section{Conclusion}

This investigation highlights the ef\/fects of factor ordering in circumscribing the range of allowed behaviors for wave functions near the Big Bang/Big Crunch singularity. By generating a suf\/f\/iciently steep positive quantum potential, orderings with $|J_+| \geq 1$ force the correspon\-ding path integral measure to be supported only on paths which avoid the singularity, while orderings with $|J_+| =\frac{1}{2}$ (including the Laplace--Beltrami/conformal ordering), generating no quantum potential, require an additional boundary condition on wave functions to determine a~path integral measure. This boundary condition dictates whether the path integral measure is supported solely on singularity-avoiding paths or whether it has support on paths which ref\/lect of\/f of the singularity.

These results af\/f\/irm the observation of Kontoleon and Wiltshire \cite{KontoleonWiltshire} regarding the sensitivity of boundary behavior in quantum cosmology to factor ordering. Moreover, the importance of boundary behavior in addressing the question of quantum resolution of the classical Big Bang/Big Crunch singularity has been discussed by Kiefer in \cite{AnnalKiefer2010}; the results in Section~\ref{FK} ref\/lect the close connection between the support of the path integral measure and the behavior of wave functions near the singularity.

In the sequel we aim to investigate the construction of Hartle--Hawking-type wave functions of the universe using the path integral measures above, and solving the Wheeler--DeWitt equation with corresponding factor ordering. To address the inherent time-reparametrization invariance of (1+1) gravity, it would be of interest to construct time-independent propagators between initial and f\/inal states as by Bunster in \cite{Teitelboim1, Teitelboim2}; in \cite{Halliwell1988}, Halliwell implements such methods to yield solutions and Green's functions for the Wheeler--DeWitt equation in quantum cosmology. Additionally, one could implement the promising methods of Gerhardt in~\cite{Gerhardt} to characterize the space of physical states.

Eventually we expect to broaden the class of factor orderings considered, and ultimately to extend analysis to cosmological models such as Friedmann--Lema\^itre--Robertson--Walker and Bianchi spacetimes.

\subsection*{Acknowledgements}The authors are grateful to Renate Loll for many fruitful conversations which informed this project, to Jan Ambj{\o}rn for the reference to Nakayama's work on 2D quantum gravity, and to Vincent Moncrief, Antonella Marini, and Maria Gordina for ongoing useful discussions about quantization and mathematical physics. Additionally, the authors would like to thank the editor and the anonymous reviewers for their thoughtful and insightful comments and suggestions.

\pdfbookmark[1]{References}{ref}
\LastPageEnding


\begin{thebibliography}{99}
\footnotesize\itemsep=0pt

\bibitem{AmbjornGlaserSatoWatabiki}
Ambj{\o}rn J., Glaser L., Sato Y., Watabiki Y., 2d {CDT} is 2{D}
 {H}o\v{r}ava--{L}ifshitz quantum gravity, \href{https://doi.org/10.1016/j.physletb.2013.04.006}{\textit{Phys. Lett.~B}} \textbf{722}
 (2013), 172--175, \href{http://arxiv.org/abs/1302.6359}{arXiv:1302.6359}.

\bibitem{Anderson2009}
Anderson E., Relational motivation for conformal operator ordering in quantum
 cosmology, \href{https://doi.org/10.1088/0264-9381/27/4/045002}{\textit{Classical Quantum Gravity}} \textbf{27} (2010), 045002,
 18~pages, \href{http://arxiv.org/abs/0905.3357}{arXiv:0905.3357}.

\bibitem{ClarkMenikoffSharp}
Clark T.E., Menikof\/f R., Sharp D.H., Quantum mechanics on the half-line using
 path integrals, \href{https://doi.org/10.1103/PhysRevD.22.3012}{\textit{Phys. Rev.~D}} \textbf{22} (1980), 3012--3016.

\bibitem{DeWitt1957}
DeWitt B.S., Dynamical theory in curved spaces. {I}.~{A}~review of the
 classical and quantum action principles, \href{https://doi.org/10.1103/RevModPhys.29.377}{\textit{Rev. Modern Phys.}}
 \textbf{29} (1957), 377--397.

\bibitem{ExnerNeidhardtZagrebnov}
Exner P., Neidhardt H., Zagrebnov V.A., Remarks on the {T}rotter--{K}ato
 product formula for unitary group, \href{https://doi.org/10.1007/s00020-011-1867-2}{\textit{Integral Equations Operator
 Theory}} \textbf{69} (2011), 451--478.

\bibitem{FarhiGutmann}
Farhi E., Gutmann S., The functional integral on the half-line,
 \href{https://doi.org/10.1142/S0217751X90001422}{\textit{Internat.~J. Modern Phys.~A}} \textbf{5} (1990), 3029--3051.

\bibitem{Faris}
Faris W.G., Self-adjoint operators, \href{https://doi.org/10.1007/BFb0068567}{\textit{Lecture Notes in Math.}}, Vol.~433,
 Springer-Verlag, Berlin~-- New York, 1975.

\bibitem{fulop}
F\"ul\"op T., Singular potentials in quantum mechanics and ambiguity in the
 self-adjoint {H}amiltonian, \href{https://doi.org/10.3842/SIGMA.2007.107}{\textit{SIGMA}} \textbf{3} (2007), 107, 12~pages,
 \href{http://arxiv.org/abs/0708.0866}{arXiv:0708.0866}.

\bibitem{GaveauSchulman}
Gaveau B., Schulman L.S., Explicit time-dependent {S}chr\"odinger propagators,
 \href{https://doi.org/10.1088/0305-4470/19/10/024}{\textit{J.~Phys.~A: Math. Gen.}} \textbf{19} (1986), 1833--1846.

\bibitem{Gerhardt}
Gerhardt C., Quantum cosmological {F}riedmann models with an initial
 singularity, \href{https://doi.org/10.1088/0264-9381/26/1/015001}{\textit{Classical Quantum Gra\-vity}} \textbf{26} (2009), 015001,
 29~pages, \href{http://arxiv.org/abs/0806.1769}{arXiv:0806.1769}.

\bibitem{GitmanTyutinVoronovPaper}
Gitman D.M., Tyutin I.V., Voronov B.L., Self-adjoint extensions and spectral
 analysis in the {C}alogero problem, \href{https://doi.org/10.1088/1751-8113/43/14/145205}{\textit{J.~Phys.~A: Math. Theor.}}
 \textbf{43} (2010), 145205, 34~pages, \href{http://arxiv.org/abs/0903.5277}{arXiv:0903.5277}.

\bibitem{gitmantyutin}
Gitman D.M., Tyutin I.V., Voronov B.L., Self-adjoint extensions in quantum
 mechanics. General theory and applications to Schr\"odinger and Dirac
 equations with singular potentials, \href{https://doi.org/10.1007/978-0-8176-4662-2}{\textit{Progress in Mathematical
 Physics}}, Vol.~62, Birkh\"auser/Springer, New York, 2012.

\bibitem{Gorbachukchuk}
Gorbachuk V.I., Gorbachuk M.L., Boundary value problems for operator
 dif\/ferential equations, \href{https://doi.org/10.1007/978-94-011-3714-0}{\textit{Mathematics and its Applications (Soviet
 Series)}}, Vol.~48, Kluwer Academic Publishers Group, Dordrecht, 1991.

\bibitem{hallquant}
Hall B.C., Quantum theory for mathematicians, \href{https://doi.org/10.1007/978-1-4614-7116-5}{\textit{Graduate Texts in
 Mathematics}}, Vol.~267, Springer, New York, 2013.

\bibitem{Halliwell1988}
Halliwell J.J., Derivation of the {W}heeler--{D}e{W}itt equation from a path
 integral for minisuperspace models, \href{https://doi.org/10.1103/PhysRevD.38.2468}{\textit{Phys. Rev.~D}} \textbf{38} (1988),
 2468--2481.

\bibitem{HartleHawking1983}
Hartle J.B., Hawking S.W., Wave function of the universe, \href{https://doi.org/10.1103/PhysRevD.28.2960}{\textit{Phys. Rev.~D}}
 \textbf{28} (1983), 2960--2975.

\bibitem{AnnalKiefer2010}
Kiefer C., Can singularities be avoided in quantum cosmology?, \href{https://doi.org/10.1002/andp.201010417}{\textit{Ann.
 Phys.}} \textbf{19} (2010), 211--218.

\bibitem{Kleinert}
Kleinert H., Path integrals in quantum mechanics, statistics, polymer physics,
 and f\/inancial markets, 5th~ed., \href{https://doi.org/10.1142/9789814273572}{World Sci. Publ.}, Hackensack, NJ, 2009.

\bibitem{KontoleonWiltshire}
Kontoleon N., Wiltshire D.L., Operator ordering and consistency of the wave
 function of the {U}niverse, \href{https://doi.org/10.1103/PhysRevD.59.063513}{\textit{Phys. Rev.~D}} \textbf{59} (1999), 063513,
 8~pages, \href{http://arxiv.org/abs/gr-qc/9807075}{gr-qc/9807075}.

\bibitem{lax}
Lax P.D., Functional analysis, \textit{Pure and Applied Mathematics (New York)},
 Wiley-Interscience, New York, 2002.

\bibitem{lebedev}
Lebedev N.N., Special functions and their applications, Dover Publications,
 Inc., New York, 1972.

\bibitem{MaitraConfProc}
Maitra R.L., Can causal dynamical triangulations probe factor-ordering issues?,
 \textit{Acta Phys. Polon.~B Proc. Suppl.} \textbf{2} (2009), 563--574,
 \href{http://arxiv.org/abs/0910.2117}{arXiv:0910.2117}.

\bibitem{Moss1988}
Moss I., Quantum cosmology and the self observing universe, \textit{Ann. Inst.
 H.~Poincar\'e Phys. Th\'eor.} \textbf{49} (1988), 341--349.

\bibitem{Nakayama1993}
Nakayama R., {$2$}{D} quantum gravity in the proper-time gauge, \href{https://doi.org/10.1016/0370-2693(94)90023-X}{\textit{Phys.
 Lett.~B}} \textbf{325} (1994), 347--353, \mbox{\href{http://arxiv.org/abs/hep-th/9312158}{hep-th/9312158}}.

\bibitem{Nelson}
Nelson E., Feynman integrals and the {S}chr\"odinger equation, \href{https://doi.org/10.1063/1.1704124}{\textit{J.~Math.
 Phys.}} \textbf{5} (1964), 332--343.

\bibitem{DLMF}
{NIST} digital library of mathematical functions, {a}vailable at
 \url{http://dlmf.nist.gov/}.

\bibitem{PatelRiveraMaitra}
Patel D., Rivera J., Maitra R.L., Propagators for 2d quantum gravity in
 proper-time gauge with varied factor ordering of the {H}amiltonian operator,
 in preparation.

\bibitem{Polyakov1981}
Polyakov A.M., Quantum geometry of bosonic strings, \href{https://doi.org/10.1016/0370-2693(81)90743-7}{\textit{Phys. Lett.~B}}
 \textbf{103} (1981), 207--210.

\bibitem{reedsimon1}
Reed M., Simon B., Methods of modern mathematical physics. {I}.~{F}unctional
 analysis, Academic Press, New York~-- London, 1972.

\bibitem{reedsimon2}
Reed M., Simon B., Methods of modern mathematical physics. {II}.~{F}ourier
 analysis, self-adjointness, Academic Press, New York~-- London, 1975.

\bibitem{SaloffCosteLecture}
Salof\/f-Coste L., The heat kernel and its estimates, in Probabilistic Approach
 to Geometry, \textit{Adv. Stud. Pure Math.}, Vol.~57, Math. Soc. Japan,
 Tokyo, 2010, 405--436.

\bibitem{SteiglHinterleitner}
\v{S}teigl R., Hinterleitner F., Factor ordering in standard quantum cosmology,
 \href{https://doi.org/10.1088/0264-9381/23/11/013}{\textit{Classical Quantum Gravity}} \textbf{23} (2006), 3879--3893,
 \href{http://arxiv.org/abs/gr-qc/0511149}{gr-qc/0511149}.

\bibitem{Teitelboim1}
Teitelboim C., Quantum mechanics of the gravitational f\/ield, \href{https://doi.org/10.1103/PhysRevD.25.3159}{\textit{Phys.
 Rev.~D}} \textbf{25} (1982), 3159--3179.

\bibitem{Teitelboim2}
Teitelboim C., Proper-time gauge in the quantum theory of gravitation,
 \href{https://doi.org/10.1103/PhysRevD.28.297}{\textit{Phys. Rev.~D}} \textbf{28} (1983), 297--309.

\bibitem{Verlinde1990}
Verlinde H., Conformal f\/ield theory, two-dimensional quantum gravity and
 quantization of {T}eichm\"uller space, \href{https://doi.org/10.1016/0550-3213(90)90510-K}{\textit{Nuclear Phys.~B}} \textbf{337}
 (1990), 652--680.

\end{thebibliography}
\end{document}